\algnewcommand{\UnindentedComment}[1]{\Statex \hspace*{\algorithmicindent} \(\triangleright\) #1}
\definecolor{darkgreen}{rgb}{0,0.5,0}
\global\mdfdefinestyle{myframe}{leftmargin=.75in,rightmargin=.75in,linecolor=black,linewidth=1.5pt,innertopmargin=10pt,innerbottommargin=10pt} 
\date{}
\newtheorem{theorem}{Theorem}[section]
\newtheorem{lemma}[theorem]{Lemma}
\newtheorem{claim}[theorem]{Claim}
\newtheorem{definition}[theorem]{Definition}
\newtheorem{corollary}[theorem]{Corollary}
\crefname{theorem}{Theorem}{Theorems}
\Crefname{lemma}{Lemma}{Lemmas}
\Crefname{alg}{Algorithm}{Algorithms}
\Crefname{claim}{Claim}{Claims}
\Crefname{infclaim}{Claim}{Claims}
\Crefname{observation}{Observation}{Observations}
\Crefname{invariant}{Invariant}{Invariants}
\Crefname{algorithm}{Algorithm}{Algorithms}
\newcommand{\eps}{\varepsilon}
\renewcommand{\G}{\mathcal{G}_{\eta}}
\newcommand{\match}{\operatorname{Match}}
\newcommand{\sol}{\operatorname{SOL}}
\newcommand{\opt}{\operatorname{OPT}}
\newcommand{\ind}{\operatorname{Ind}}
\newcommand{\rtag}[1]{\text{\quad $\left(\text{#1}\right)$}}
\begin{document}

\title{Efficient $\eps$-approximate minimum-entropy couplings}
\author[]{Spencer Compton}
\affil[]{Stanford University \authorcr
  comptons@stanford.edu}

\maketitle

\begin{abstract}
Given $m \ge 2$ discrete probability distributions over $n$ states each, the minimum-entropy coupling is the minimum-entropy joint distribution whose marginals are the same as the input distributions. Computing the minimum-entropy coupling is strongly NP-hard, but there has been significant progress in designing approximation algorithms; prior to this work, the best known polynomial-time algorithms attain guarantees of the form $H(\operatorname{ALG}) \le H(\operatorname{OPT}) + c$, where $c \approx 0.53$ for $m=2$, and $c \approx 1.22$ for general $m$ \cite{compton2023minimum}.

A main open question is whether this task is APX-hard, or whether there exists a polynomial-time approximation scheme (PTAS). In this work, we design an algorithm that produces a coupling with entropy $H(\operatorname{ALG}) \le H(\operatorname{OPT}) + \eps$ in running time $n^{\poly(1/\eps) \cdot \exp(O(m)) }$: showing a PTAS exists for constant $m$.
\end{abstract}

\begin{IEEEkeywords}
Minimum entropy, couplings, information theory, approximation algorithms, combinatorial optimization.
\end{IEEEkeywords}

\section{Introduction}
The minimum-entropy coupling problem entails the following task: given $m \ge 2$ discrete probability distributions $p_1,\dots,p_m$ over $n \ge 2$ states each, what is the joint distribution of minimum Shannon entropy whose marginals are the same as the input distributions? For $m=2$, this can alternatively be viewed as a coupling that maximizes mutual information, since $I(X;Y) = H(X)+H(Y)-H(X,Y)$. 

While \textit{maximizing} the entropy of the coupling is simple (the joint distribution where variables are independent), \textit{minimizing} the entropy is a concave minimization problem that is typically more challenging (e.g. \cite{cardinal2008tight}). Indeed, the minimum-entropy coupling problem is strongly NP-hard \cite{kovavcevic2012hardness}, yet is a fundamental task in information theory with numerous applications that motivate the study of polynomial-time approximation algorithms. 

\subsection{Related applications}

In communications, suppose we want to send information about some random variable $X$, yet we must do so by communicating in a manner with marginal distribution $Y$. In this case, we may proceed by communicating according to some chosen joint distribution over $X,Y$ (i.e., given a realization of $X=x$, then communicate corresponding to the conditional distribution of $Y \, | \, X=x$); the minimum-entropy coupling of $X,Y$ is exactly the valid communication procedure that maximizes the mutual information $I(X;Y)$. This is realized by \cite{sokota2022communicating}, who leverage reinforcement learning and the coupling algorithm of \cite{cicalese2019minimum} to play \textit{Markov coding games}, where they hope to successfully play a game while also communicating through Markov decision process trajectories. For example, they design an agent that simultaneously plays the game Pong while also communicating images via its actions in the game Pong. The work of \cite{ebrahimi2024minimum} introduces a variant \textit{minimum-entropy coupling with bottleneck}, which they also employ for Markov coding games.

The work of \cite{de2022perfectly} similarly leverages this coupling communication perspective for \textit{steganography}, where one hopes to encode secret messages in innocuous-seeming text. They use the greedy coupling algorithm of \cite{kocaoglu2017entropic} to encode secret messages in text (GPT-2), audio (WaveRNN), and images (Image Transformer). Later work of \cite{sokota2024computing} uses heuristics to construct low-entropy couplings of autoregressive distributions, so they may perform steganography when message priors are large/autoregressive.

In causal inference, we often hope to determine the direction of causal relationships between random variables. The \textit{entropic causal inference} framework of \cite{kocaoglu2017entropic} aims to learn causal directions from only observational data, by fitting in the ``simpler'' direction requiring less entropy (motivated by Occam's razor). In the pairwise setting \cite{kocaoglu2017entropic,kocaoglu2017isit,compton2020entropic}, the cost of fitting $X \rightarrow Y$ is $H(X)+H(E)$ for the minimum entropy $E$ where $X \perp\!\!\!\!\perp E$, and there exists some function $f$ where when $Y=f(X,E)$, then $X,Y$ have the correct joint distribution; the minimum entropy $E$ is ultimately given by the minimum-entropy coupling of all conditional distributions $Y \, | \, X=x$. This line of work is also extended to learning entire causal graphs \cite{compton2022entropic}, and further related work \cite{javidian2021quantum,javidian2022quantum}.

We now more briefly discuss a wider collection of applications. \cite{liang2023multimodal} use minimum-entropy couplings to estimate interactions between different modalities in multimodal data where labeling is time-consuming. \cite{bounoua2025learning} leverage a continuous version of the minimum-entropy problem to match multimodal data. \cite{chowdhury2025fundamental} uses minimum-entropy couplings for \textit{concept erasure}, where we hope to erase information about a sensitive attribute while retaining maximum information. A line of work in \cite{zamani2024improving,zamani2025private,zamani2025variable} uses minimum-entropy couplings for private data compression. The works of \cite{vidyasagar2012metric,cicalese2016approximating} use minimum-entropy couplings for dimensionality reduction. Further applications are generally well-discussed in \cite{cicalese2019minimum,li2021efficient}, including functional representation in \cite{cicalese2019minimum}, and random number generation in \cite{li2021efficient}.

\subsection{Prior algorithmic guarantees}
Since computing the minimum-entropy coupling is strongly NP-hard, there has been much interest in designing approximation algorithms where the entropy of the algorithm's output coupling, $H(\operatorname{ALG})$, is not much larger than that of the optimal coupling, $H(\operatorname{OPT})$. Let us call an algorithm $c$-additive if it produces a coupling where $H(\operatorname{ALG}) \le H(\operatorname{OPT}) + c$. 

The first works on approximation guarantees studied a collection of algorithms with respect to a \textit{majorization} lower bound. Cicalese, Gargano, and Vaccaro \cite{cicalese2016approximating} introduced an algorithm that is $1$-additive when $m=2$, and $\lceil \log(m) \rceil$-additive for general $m$: yielding the first constant-additive guarantee for constant $m$. The work of Kocaoglu, Dimakis, Vishwanath, and Hassibi \cite{kocaoglu2017entropic} introduced the \textit{greedy coupling algorithm}, that was later shown to be a local optimum by the same authors \cite{kocaoglu2017isit}, and a $1$-additive algorithm when $m=2$ by Rossi \cite{rossi2019greedy}. Later, Li \cite{li2021efficient} introduced a new algorithm that is $(2-2^{2-m})$-additive: yielding the first additive constant for general $m$. Compton \cite{compton2022tighter} improved the greedy coupling guarantee to $\log(e) \approx 1.44$ for general $m$, while also showing a barrier in how the approximation analysis was tight with respect to the majorization lower bound.

The simultaneous works of Compton, Katz, Qi, Greenewald, and Kocaoglu \cite{compton2023minimum}, and Shkel and Yadav \cite{shkel2023information}, introduce a stronger lower bound called the \textit{profile} lower bound (or the \textit{information spectrum}, respectively). Moreover, in \cite{compton2023minimum}, they improve the guarantees of the greedy coupling algorithm: showing it is $\log(e)/e \approx 0.53$-additive for $m=2$, and $(1+\log(e))/2 \approx 1.22$-additive for general $m$ (see Appendix D of \cite{compton2023minimum} for guarantees in other small values of $m$). Prior to this work, these are the best-known approximation guarantees for polynomial-time algorithms.

More tangentially related, the work of \cite{compton2023minimum} also provides algorithms for exactly computing the minimum-entropy coupling in exponential time. Additionally, the works of \cite{yadav2025information,ma2025efficient} argue how the profile-based proof techniques can also give approximation guarantees for general R\'enyi entropy. 

Summarizing the current landscape, previously there was no known polynomial-time algorithm for coupling $m=2$ distributions where $H(\operatorname{ALG}) \le H(\operatorname{OPT}) + 0.52$, and it seems natural to wonder whether there exists some constant $c^*>0$ such that no polynomial-time algorithm can attain $H(\operatorname{ALG}) \le H(\operatorname{OPT}) + c^*$ (i.e., the problem is APX-hard). In this work, we will design a new algorithm that refutes this possibility: showing a PTAS when $m$ is a constant.
\subsection{Our result}

Let $\opt$ be a minimum-entropy coupling for $m$ discrete distributions with at most $n$ distribution states each. In our main result, we show the existence of an efficient $\eps$-additive algorithm:
\begin{restatable}{theorem}{maintheorem}\label{theorem:main-theorem}
     For any $0 < \eps < \nicefrac{1}{2}$, there exists an algorithm with running time $n^{O(m^7 \cdot 2^{6m} \cdot \log^4(1/\eps)/\eps^2 )}$ that outputs a coupling $\operatorname{ALG}$, where $H(\operatorname{ALG}) \le H(\opt) + \eps$.
\end{restatable}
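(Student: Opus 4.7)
The plan is to reduce the search for an $\eps$-approximately optimal coupling to an enumeration over a small family of compactly-described candidate couplings, then verify each via a transportation-style feasibility check.

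I would first \emph{discretize} the problem. Round every input probability to a multiple of $\delta = \eps/\operatorname{poly}(n,1/\eps)$, so that only $O(\log(n/\eps)/\eps)$ distinct marginal probability values appear, and argue via the continuity of Shannon entropy in total variation distance (Fannes-type inequalities) that the optimum of the rounded instance is within $O(\eps)$ of $H(\opt)$. This makes the base of the enumeration $O(n/\eps)$ rather than $n$, which is where the $n/\eps$ base in the target running time comes from.

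The heart of the argument is a \emph{structural lemma}: there exists a coupling with entropy within $\eps/2$ of $H(\opt)$ whose cells can be grouped into $T = O\!\left(m^4 \cdot 2^{6m} \cdot \log^3(1/\eps)/\eps^2\right)$ types, where each type specifies a geometric scale for the joint probability of the cell together with a scale for each of its $m$ marginal projections. The profile-based lower bound of~\cite{compton2023minimum} is central here, because it constrains the shape of $\opt$'s profile and allows cells of similar scale to be coalesced with controlled entropy cost. A compact coupling is then fully specified by declaring, for each of the $T$ types, which of at most $O(n/\eps)$ bucketed marginal coordinates participate in it, yielding a search space of size $(n/\eps)^{O(T)}$, matching the theorem. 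For each such description I would solve a small LP on a ground set of size $\operatorname{poly}(n/\eps,T)$ that enforces marginal consistency with the (rounded) inputs and agreement with the type-profile guess; the algorithm outputs the feasible coupling of smallest entropy across all guesses.

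The main obstacle is the structural lemma. It requires (i) partitioning probabilities into logarithmically many scales so that $-p\log p$ is essentially a function of the scale alone, (ii) showing that within each scale the cells can be further grouped by the $2^m$-valued pattern of which marginal projections happen to lie in the same scale as the joint cell (which is where the $2^{O(m)}$ blow-up in $T$ enters), and (iii) bounding the cumulative entropy error across repeated local mass-coalescing operations while ensuring the resulting profile is realized by an \emph{actual} coupling respecting all $m$ marginals. The triple-log factor $\log^3(1/\eps)$ likely reflects nested bucketings at three levels (joint scale, marginal scales, and probability values within a scale). Designing these local merges so they compose without cascade blow-up --- and simultaneously certifying feasibility at the level of types rather than explicit cells --- is the technically delicate step.
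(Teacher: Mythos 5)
Your proposal takes a genuinely different route from the paper, but it has a gap at exactly the point you flag as delicate: the structural lemma. You posit a near-optimal coupling whose cells fall into $T = O(m^4 2^{6m}\log^3(1/\eps)/\eps^2)$ types, with $T$ independent of $n$. The central difficulty of this problem --- which the paper spends the opening of its algorithm section explaining --- is that the probabilities relevant to a near-optimal coupling span a range $[\poly(\eps/n),1]$, so any scheme that buckets joint-cell and marginal probabilities by geometric scale needs $\Omega(\log(n/\eps))$ scales; an $n$-dependence then enters $T$ and hence the exponent of the running time, destroying the PTAS. (Relatedly, rounding to multiples of $\delta=\eps/\poly(n,1/\eps)$ leaves $1/\delta = \poly(n,1/\eps)$ distinct values, not $O(\log(n/\eps)/\eps)$; geometric rounding gives logarithmically many values but is not closed under the subtractions $x_i-z$ that arise when a coupling cell consumes part of a marginal state, a second obstacle your discretization does not address and which the paper's set $\G = \{2^{-i}\eps j\}$ is specifically engineered for.) The paper's resolution is not a static type decomposition plus enumeration at all: it is a dynamic program whose state records counts of marginal states only within a sliding window $[\alpha M, M]$, together with an operation that repeatedly halves over-large states until they enter the window; the analysis then morphs an optimal coupling into a DP-consistent one and charges the cost of the halvings against later matched and leftover mass via a bookkeeping of ``important'' and ``pierced'' coupling states. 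Nothing in your sketch supplies a substitute for this mechanism, and you yourself identify the composition of local merges without cascading error as unresolved.

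Two further points. First, the profile lower bound of \cite{compton2023minimum} plays no role in the paper's proof; the paper compares its output directly against a (modified) optimal coupling, which is essential for achieving error $\eps$ rather than the fixed additive constants that lower-bound-based analyses are known to be stuck at. Second, even granting your structural lemma, the verification step needs the type description to pin down the entropy to within $O(\eps)$; otherwise your LP only certifies feasibility of a family of couplings whose entropies you cannot then minimize by linear programming, since entropy is concave. That requirement is again hidden inside the unproven lemma.
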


As an immediate corollary, we conclude how for constant $m$ and constant $\eps>0$, the running time is polynomial in $n$. This significantly improves beyond prior works, where even for $m=2$ there was no known polynomial-time algorithm with additive approximation guarantee smaller than $\log(e)/e \approx 0.53$.
\begin{corollary}
    For any constant $m=O(1)$ and constant $\eps > 0$, there exists an algorithm with running time $n^{O(1)}$ that outputs a coupling $\operatorname{ALG}$, where $H(\operatorname{ALG}) \le H(\opt) + \eps$.
\end{corollary}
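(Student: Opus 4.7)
The plan is to reduce the minimum-entropy coupling problem to enumerating a polynomial number of candidate ``coarse'' coupling shapes, each described by a compact combinatorial object, and then picking the best feasible one. Throughout, I would exploit that entropy depends only on the multiset of joint probabilities, so a near-optimal coupling is determined by its coarsened profile up to a small additive entropy error.

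First, I would discretize: round each marginal entry to a multiple of some $\delta = \operatorname{poly}(\eps/(nm))$. Standard coupling-perturbation bounds imply the optimal entropy of the rounded instance is within $\eps/4$ of the original. Next, partition each marginal's support into $L = O(\log(1/\eps)/\eps)$ logarithmic buckets so that any two probabilities in the same bucket differ by at most a factor of $1+\eps$. The core structural claim I would establish is: there exists a coupling with entropy at most $H(\opt)+\eps/2$ whose mass restricted to each $m$-tuple of buckets is distributed ``near-uniformly'' --- all entries within that block lie within a controlled factor of one another. The proof is a concavity/smoothing argument: averaging entries inside a block can only decrease entropy, while the marginal violation introduced by averaging can be patched by a small residual coupling whose entropy cost is controlled in terms of $\eps$ and $m$.

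Given this structural lemma, any near-optimal coupling is captured by (i) a bucket-level transportation plan specifying how much total mass flows through each of the $L^m$ bucket-tuples, plus (ii) a bounded amount of within-block detail. The bucket-level transport has $L^m$ entries, each a multiple of $\delta$, yielding at most $(n/\eps)^{L^m \cdot \operatorname{poly}(m)}$ plans to enumerate; for each, a transportation LP reconstructs a concrete coupling on the original state space, we compute its entropy, and we output the best. To meet the claimed $(n/\eps)^{O(m^4 \cdot 2^{6m}\cdot \log^3(1/\eps)/\eps^2)}$ bound the accounting must be tighter than naive. I suspect the $2^{6m}$ factor arises from an explicit case analysis over subsets of coordinates distinguishing which marginals contribute ``heavy'' versus ``light'' mass to a given block, which is needed to control per-block entropy estimates uniformly across scales.

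The main obstacle is the structural lemma. Naively averaging within a bucket-block breaks marginal constraints, and correcting the violation can propagate errors both across blocks and between scales. Handling this cleanly will likely require a two-stage argument: enumerate the coarse bucket-level transport first (absorbing a small additive error from discretization), then show that within each block a \emph{specific} near-uniform allocation suffices. The error budget must sum to $\eps$ across at most $L^m \cdot 2^{O(m)}$ blocks, forcing a very tight per-block bound and plausibly explaining the $\log^3(1/\eps)/\eps^2$ dependence in the exponent. The remaining steps (initial discretization, enumeration, LP rounding for feasibility) should be routine once the structural lemma is in hand.
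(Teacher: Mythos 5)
This corollary is a one-line consequence of Theorem~\ref{theorem:main-theorem}: substituting constant $m$ and constant $\eps$ makes the exponent $O(m^4 \cdot 2^{6m} \cdot \log^3(1/\eps)/\eps^2)$ a constant and the base $n/\eps = O(n)$, so the running time is $n^{O(1)}$. Your proposal instead attempts to re-derive the PTAS from scratch, which is a far larger undertaking than the statement requires and, more importantly, contains a gap at exactly the point the paper identifies as the central obstacle.

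The concrete problem is your bucket count. You round marginal entries to multiples of $\delta = \poly(\eps/(nm))$ and then use $L = O(\log(1/\eps)/\eps)$ logarithmic buckets with consecutive ratio $1+\eps$. But to cover the range $[\delta, 1]$ you need $\log_{1+\eps}(1/\delta) = \Theta(\log(nm/\eps)/\eps)$ buckets, not $O(\log(1/\eps)/\eps)$ --- and the $n$-dependence is unavoidable, since genuinely important coupling masses can be as small as $\poly(\eps/n)$ (e.g., coupling two uniform distributions over $n$ states forces coupling states of size $\Theta(1/n)$). With the corrected $L = \Theta(\log(n/\eps)/\eps)$, enumerating bucket-level transport plans over $L^m$ bucket-tuples, each a multiple of $\delta$, gives roughly $(n/\eps)^{L^m} = n^{\poly\log(n)}$ candidates for fixed $m,\eps$ --- quasi-polynomial, not polynomial. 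This is precisely the objection the paper raises against naive makespan-style rounding (``the range of important numbers is too large''), and the paper resolves it not by a global bucket-level enumeration but by a sliding-window dynamic program whose DP-state records only counts within a window $[\alpha M, M]$, together with a state-\emph{splitting} operation that brings far-apart scales into the same window without blowing up the state space. Your proposal has no analogue of either mechanism. Separately, your structural lemma (near-uniform within bucket-blocks up to $\eps/2$ entropy) is stated but not proven, and as you note the concavity/smoothing step breaks marginals; but even granting it, the enumeration step above does not achieve $n^{O(1)}$.
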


In \cref{sec:algo} we will introduce our new algorithm, in \cref{sec:analysis} we will analyze its approximation error, and in \cref{sec:discuss} we will discuss remaining open problems.

\subsection{Preliminaries}
Throughout this paper, $\log$ is base 2. The value of $H(x)$ refers to the Shannon entropy of $x$, and is applied to any non-negative vector or multiset (meaning, $H(x)=\sum_i x_i \log(1/x_i)$). For any non-negative scalar $x$, we similarly use $\phi(x) = x \log(1/x)$.

\section{$\eps$-approximate coupling algorithm}\label{sec:algo}

We begin by sharing some motivation for our approach. Suppose you are trying to solve an optimization problem where the input is a collection of numbers in a bounded range. A classical idea in approximation algorithms is to reduce the number of distinct values of input numbers (say, by rounding numbers to some power of $(1+\eps)$, and arguing how sufficiently small numbers can be handled separately), and then design an algorithm with running time that depends on the distinct number of values. For example, this technique was famously employed for the makespan minimization problem by Hochbaum and Shmoys \cite{hochbaum1987using} and later work (e.g. \cite{leung1989bin,hochba1997approximation,alon1997approximation,alon1998approximation,jansen2010eptas,chen2014optimality,jansen2020closing}). In the makespan minimization problem, $n$ jobs with different completion times must be assigned among $m$ identical machines, and the goal is to minimize the maximum load given to a machine. Without giving much detail, the makespan minimization literature is a natural place to look for ideas, as it is quite similar to the 3-Partition problem from which \cite{kovavcevic2012hardness} showed strong NP-hardness of minimum-entropy coupling. 

Unfortunately, there are some clear obstacles towards using this distinct-values approach for minimum-entropy coupling. Primarily, the range of important numbers is too large. After renormalizing, makespan minimization can focus on values in some range $[\eps,1]$, but for minimum-entropy coupling there seem to be important probability states with values in a much larger range of $[\poly(\eps/n),1]$. Upon first glance, this is what looks quite unsalvageable for using this style of approach. As far as we can tell, any naive application of the ideas in the makespan minimization literature cannot attain a running time with polynomial dependence in $n$ for our problem.  There are also less intimidating (but important) obstacles for constructing couplings with this approach; for example, if we couple $z$ mass from two distributions' states $p_i(a),p_j(b)$, then the remaining masses of $p_i(a)-z$ and $p_j(b)-z$ must remain in our collection of distinct values. 

Despite these obstacles, we will still be able to design a PTAS leveraging a strategy that bounds the number of distinct values. We will begin with a rough sketch of an approach that is too slow, then refine it to a sketch of an approach that will be sufficiently fast, and finally provide a rigorous algorithm. For simplicity, the first two sketches will focus on the case of coupling just two distributions $p_1,p_2$. 

\textbf{Sketch of slow approach. } Our first approach will be a dynamic program where we maintain the rounded values of the remaining uncoupled mass for each distribution, and create couplings one state at a time. We will make some strong assumptions in the presentation of this slow approach (some assumptions are not quite correct), but they will help us convey intuition: (i) suppose there is some $\tau$ where we may ignore any distribution state once it is smaller than $\tau$ (you should think of these states as later being handled by some postprocessing), (ii) suppose there is a set $\mathcal{G}$ of values in $[\tau,1]$, where it is always fine  to round down any distribution state value to $\mathcal{G}$ throughout the process of coupling, and (iii) when choosing the size of the next coupling state, it is fine to only consider values in $\mathcal{G}$.\footnote{In reality, this second assumption is particularly incorrect (it would result in uncoupled mass), but this improves presentation for the sketch.} Since we ignore states with value $< \tau$, then at every point in the coupling process there are at most $1/\tau$ relevant distribution states in each of $p_1$ and $p_2$. Since each distribution state takes one of $|\mathcal{G}|$ values, this means the relevant information about $p_1,p_2$ can be described as one of $(1 + 1/\tau)^{2|\mathcal{G}|}$ dynamic programming states (for the rest of the paper, we call these DP-states). When choosing the next state of the coupling, there are $|\mathcal{G}|$ options for what size to make the coupling state (via assumption (iii)), $|\mathcal{G}|$ options for what size of distribution state from $p_1$ will be coupled, and $|\mathcal{G}|$ options for what size of distribution state from $p_2$ will be coupled, for a total of $O(|\mathcal{G}|^3)$ possible transitions. Consider briefly the pseudocode in \cref{alg:slow}, which would not actually produce a valid coupling, but gives a sense of an initial approach one might hope to use.\footnote{Do not read much into the base case of this dynamic program, or generally the details in the pseudocode. The main purpose is to see a rough outline of how a relevant dynamic program might look, before we introduce more technical ideas.}

\begin{algorithm}[]
    \caption{Slow algorithm sketch (does not actually produce a valid coupling)}
   \label{alg:slow}
\begin{algorithmic}[1]
    \State {\bfseries Input:} Multisets $S_1,S_2$ of elements of $\mathcal{G}$ detailing the remaining distribution states of $p_1,p_2$.
    \Procedure{SlowCoupling}{$S_1,S_2$} 
    \If{DP[$S_1,S_2$] already computed}
        \Return DP[$S_1,S_2$]
    \EndIf
    \If{$S_1 = \emptyset$ or $S_2 = \emptyset$}
        \Return 0
    \EndIf
    \State DP[$S_1,S_2$] $\gets \infty$
    \For{$z \in \mathcal{G}$, $a \in S_1$, $b \in S_2$, $z \le a,b$} 
    \UnindentedComment{Consider coupling $z$ mass from a state of size $a$ from $p_1$ and size $b$ from $p_2$}
    \State DP[$S_1,S_2$] $\gets \min(\text{DP}[S_1,S_2], \phi(z) + \operatorname{SlowCoupling}((S_1 \backslash \{a\}) \cup \operatorname{Round}(a-z), (S_2 \backslash \{b\}) \cup \operatorname{Round}(b-z))$
    \EndFor
    \State \Return DP[$S_1,S_2$].
    \EndProcedure
\end{algorithmic}
\end{algorithm}

In addition to the above algorithm not actually producing a valid coupling, the main concern is how the running time is about $(1/\tau)^{O(|\mathcal{G}|)}$, where we roughly expect the parameters to take values like $\tau = \poly(\eps/n)$, and $|\mathcal{G}| =  O(\log(1/\tau)/\eps)$.

\textbf{Motivation for a faster approach. } We would like to improve the running time by reducing the size of the DP-state space. Ideally, we would like for the algorithm to somehow only need to maintain the number of distribution states of $p_1,p_2$ taking values in some small range $[l,r] \cap \mathcal{G}$, where $[l,r]$ is a sliding window.  More concretely,  our new dynamic program will try to maintain an invariant where for a range $[l,r]$ specified in its DP-state: (i) all remaining distribution states $\ge \tau$ take values in $\mathcal{G}$, (ii) there are no remaining distribution states $>r$, (iii) we track the counts of the number of distribution states with each value in $[l,r] \cap \mathcal{G}$ for $p_1,p_2$, and (iv) all distribution states $<l$ have not been modified.

For a fixed $[l,r]$, the DP-state is then entirely defined by the counts of the number of distribution states with each value in $[l,r] \cap \mathcal{G}$ for $p_1,p_2$. If it is feasible to use a sufficiently small sliding window, then this will let us get our desired running time. 

However, maintaining such an invariant (while still being approximately optimal) seems quite daunting. For example, consider the case where $p_1$ is uniform over $n/k$ states, and $p_2$ is uniform over $n$ states, for some large value of $k$. We will want to couple distribution states of $p_1$ with distribution states of $p_2$, but these distribution states are a large factor of $k$ apart, so they would not simultaneously be within a small sliding window. One of our key observations will be to remedy this by introducing the option of \textit{splitting the largest distribution state in our sliding window in half} whenever the dynamic program chooses not to couple it with some distribution state of similar mass. In our earlier example, our intuition is if we keep splitting distribution states of $p_1$ in half until they fit into the same sliding window as $p_2$ distribution states, then later coupling the distribution states of $p_1$ and $p_2$ will not introduce much approximation error. At this moment, it should be very non-obvious why the splitting approach is not too lossy in general, but we will defer analysis until later. With the main motivations behind our faster algorithm in hand, in addition to our vague fix of splitting to try keep everything within a small sliding window, we now introduce the full algorithm.

\textbf{Introducing the algorithm. } In the remainder of this section, we will describe our algorithm and why it finds a valid coupling, but will defer the proof of its approximation error. Recall the input is $m$ distributions $p_1,\dots, p_m$, each with at most $n$ distribution states. Our algorithm will use an internal parameter $\eta$, which plays a role similar to $\eps$. Later, we will bound the approximation error in terms of $\eta$, and choosing the value of $\eta$ in terms of $\eps$ will enable an $\eps$-additive guarantee.

\textit{\underline{Rounding points $\G$.}}  Throughout this work, a main goal is to have a bounded number of distinct values of distribution states. Originally, distribution states can take any value in $[0,1]$. Instead, we will restrict to a set of points $\G \subset (0,1]$ where, after rounding the initial distribution states to these points, we hope to be able to stay within this set. For example, we would like for it to be the case that $|\G|$ is ``small'', and for any $x,y \in \G$ where $x > y > 0$, that $x-y \in \G$. In reality, we will leverage more nuanced properties, like how if $x,y$ are close (but unequal), then $x-y$ can be written as the sum of two elements in $\G$ that are not too small. We define the points of $\G$:
\begin{definition}
    Suppose $\eta \le \frac{1}{2^{3m} \cdot 4m} \le \frac{1}{512}$ is a power of 2. Then, 
    \begin{equation*}
    \G \triangleq \{2^{-i}\eta j \, \, \, | \, \, \, i \in \mathbb{Z}, \, \, \, j \in \{1,\dots, \nicefrac{2}{\eta^2}-1\} \,\, \} \cap (0,1].
    \end{equation*}
\end{definition}

\textit{\underline{Preprocessing.}} Next, we will do some preprocessing modifications where each resulting distribution still has at most $O(n \log(1/\tau))$ distribution states, and all distribution states are either in $\G$, or take value $<\tau$.

\begin{definition}[Preprocessing rounding procedure in \cref{alg:preprocess}]
    In the following rounding procedure for a distribution, consider each distribution state one at a time. For some distribution state with value $x$, repeat the following process until $x < \tau$: let $y$ be the largest $y \in \G$ where $y \le x$, then replace $x$ with $y$ and $x-y$, and continue to round $x-y$.
\end{definition}

\begin{claim}\label{claim:preprocessing-rounds}
    After using the preprocessing rounding procedure, each distribution $p_i$ will have at most $n \lceil 1 + \log(1/\tau) \rceil$ states.
\end{claim}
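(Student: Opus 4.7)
The plan is to bound the number of output states contributed by each of the at most $n$ input states of $p_i$ separately, and show that each contributes at most $\lceil 1 + \log(1/\tau)\rceil$ states; summing over input states yields the claimed total. The whole argument rests on a strict halving lemma for the per-step progress of the rounding procedure.

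The halving lemma I would prove is: for every $x \in (0,1]$, the largest $y \in \G$ with $y \le x$ satisfies $x - y < x/2$. I would prove it by exhibiting one specific candidate $y' \in \G$ close to $x$. Take the integer $i \ge 0$ with $2^{-i} \le x < 2^{-i+1}$ and set $y' = 2^{-i}\eps \lfloor x/(2^{-i}\eps)\rfloor$. The index $j = \lfloor x/(2^{-i}\eps)\rfloor$ satisfies $j \ge \lfloor 1/\eps\rfloor \ge 1$ and $j \le 2/\eps - 1 \le 2/\eps^2 - 1$ using $\eps \le 1/512$, so $y' \in \G$ as long as $y' < 1$. For the lone boundary case $x = 1$ with $i = 0$, one simply uses $y' = 1 - \eps \in \G$ instead. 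In every case $y' \le x$ and $x - y' < 2^{-i}\eps \le \eps x \le x/2$, with the last inequality strict because $\eps < 1/2$. Since the actual $y$ is the largest element of $\G$ below $x$, we have $y \ge y'$ and hence $x - y < x/2$.

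With the lemma in hand, fix one input state $x_0 \in (0,1]$ and let $x_0, x_1, x_2, \dots$ be the successive residuals with $x_{t+1} = x_t - y_{t+1}$. By induction $x_t < x_0/2^t \le 2^{-t}$ for every $t \ge 1$. The procedure stops at the smallest index $k$ with $x_k < \tau$. If $k = 0$, the input state contributes at most one output state. Otherwise $\tau \le x_{k-1} < 2^{-(k-1)}$ gives $k - 1 < \log(1/\tau)$, and since $k$ is a positive integer this yields $k \le \lceil 1 + \log(1/\tau)\rceil - 1$. The outputs coming from $x_0$ are $y_1, \dots, y_k$ together with the final residual $x_k$, for a total of at most $k + 1 \le \lceil 1 + \log(1/\tau)\rceil$ states. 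Summing over the at most $n$ input states of $p_i$ yields the stated bound.

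There is no single hard step here; the proof is essentially a short verification. The one piece that actually requires care is maintaining \emph{strict} inequality in the halving lemma, because otherwise when $\log(1/\tau)$ happens to be an integer the integer-strict-inequality step $k < 1 + \log(1/\tau)$ weakens to $k \le 1 + \log(1/\tau)$ and the final count gains one, breaking the claimed bound. This is precisely why the hypothesis $\eps < 1/2$ (in fact $\eps \le 1/512$) appears in the definition of $\G$, and why the boundary checks on the index $j$ and on $y' \in (0,1)$ must be verified at the right scale.
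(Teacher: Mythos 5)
Your proof is correct and takes essentially the same route as the paper's: establish a per-step strict halving bound for the rounding recursion, then count rounds. The one substantive difference is the choice of exponent in the $\G$-parametrization: you pick $i$ so the coefficient $x/(2^{-i}\eps)$ lies in $[1/\eps, 2/\eps)$, giving $x - y < \eps x$, whereas the paper writes $x = 2^{-i}\eps K$ with $K \in [1/\eps^2, 2/\eps^2)$, giving the stronger $x - y \le \eps^2 x$ (\cref{eq:rounding-rem}). Both yield $x - y < x/2$ and hence prove this claim, but the paper's $\eps^2 x$ bound is reused verbatim in the proof of \cref{lemma:rounding-error} to bound the entropy increase from rounding; with your weaker parametrization you would need to rederive the $\eps^2$ rate there. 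One minor slip: for $k = 1$ the chain $\tau \le x_{k-1} < 2^{-(k-1)}$ uses $x_0 < 1$, which fails when an input state has mass exactly $1$; the final count $k + 1 \le \lceil 1 + \log(1/\tau) \rceil$ still holds (since $\tau < 1$ gives $\lceil 1 + \log(1/\tau)\rceil \ge 2$), but the stated intermediate strict inequality does not.
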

\begin{proof}
    Write $x$ in the form $2^{-i} \eta K$ where $i$ is an integer, and $K$ is not necessarily an integer but it is within $[\frac{1}{\eta^2},\frac{2}{\eta^2})$. Then, $y$ will be at least $2^{-i}\eta \lfloor K \rfloor \in \G$. Moreover,
     \begin{equation}
         x-y \le 2^{-i} \eta = x/K \le \eta^2 x < x/2. \label{eq:rounding-rem}
     \end{equation}
     Hence, the value of $x$ after each round will be less than half its value at the beginning of the round, implying there are at most $\lceil \log(1/\tau) \rceil$ rounds.
\end{proof}

After preprocessing, distribution states are either in $\G$ or are $<\tau$. Later, we will also argue preprocessing does not increase the entropy of the optimal coupling by much. For future notation, let $S_i^{\text{start}}$ be the multiset of values for $p_i$ after this preprocessing.

\textit{\underline{Dynamic program: DP-state space.}} Our DP-state will be specified by: (i) some value $M \in \G \cap [\tau / \alpha,1]$ (for some $0<\alpha < 1$), and (ii) counts (between $0$ and $1/\tau$, inclusive) for each distribution of the states taking values in $[\alpha M, M] \cap \G$. In terms of notation, we will equivalently represent the counts of values by multisets $S_1,\dots, S_m$, where $S_j$ is the multiset of values for $p_j$ within $[\alpha M, M]$. We will use the following claim to help bound the number of potential DP-states.

\begin{claim} \label{claim:bound-G-range}
    For any $x>0$ and $0<\beta \le 1$, it holds that $|\G \cap [\beta x, x]| \le O(\log(\nicefrac{1}{\eta \beta})/\eta^2)$.
\end{claim}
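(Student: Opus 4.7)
The plan is to bound the count by fixing the ``scale'' index $i$ in the parametrization $2^{-i}\eps j$ and arguing that only $O(\log(1/(\eps\beta)))$ scales can contribute, each contributing at most $2/\eps^2 - 1$ elements by definition of the allowed range of $j$.

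First I would observe that for each fixed integer $i$, the set $\{2^{-i}\eps j : j \in \{1,\dots,2/\eps^2-1\}\}$ has cardinality at most $2/\eps^2$, so it suffices to count the values of $i$ for which this set meets $[\beta x, x]$ at all. For scale $i$ to contribute, we need some $j\in\{1,\dots,2/\eps^2-1\}$ with $2^{-i}\eps j \in [\beta x, x]$. The smallest possible value at scale $i$ is $2^{-i}\eps$ (at $j=1$), so we must have $2^{-i}\eps \le x$, i.e.\ $i \ge \log_2(\eps/x)$. The largest possible value at scale $i$ is bounded above by $2^{-i+1}/\eps$, so we must have $2^{-i+1}/\eps \ge \beta x$, i.e.\ $i \le \log_2(2/(\beta x \eps))$.

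The number of integers $i$ in this interval is at most
\[
\log_2\!\left(\tfrac{2}{\beta x \eps}\right) - \log_2\!\left(\tfrac{\eps}{x}\right) + 1 = \log_2\!\left(\tfrac{2}{\beta \eps^2}\right) + 1 = O(\log(1/(\eps\beta))),
\]
using $\log(1/\eps^2) + \log(1/\beta) = O(\log(1/(\eps\beta)))$. Multiplying by the per-scale bound $2/\eps^2$ gives $|\mathcal{G}\cap[\beta x,x]| \le O(\log(1/(\eps\beta))/\eps^2)$, as required.

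There is no real obstacle here; this is a one-line counting argument once one writes down the parametrization of $\mathcal{G}$. The only care needed is in checking that the $x$ in the numerator and denominator of the range of $i$ cancel (so the bound is independent of $x$, as the claim states) and that both $\log(1/\beta)$ and $\log(1/\eps)$ fit inside $\log(1/(\eps\beta))$ up to constants.
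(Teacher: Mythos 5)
Your proposal is correct and follows essentially the same approach as the paper: bound the number of contributing scales $i$ by $O(\log(1/(\eps\beta)))$ via the constraints $2^{-i}\eps \le x$ and $2^{-i+1}/\eps \ge \beta x$ (the $x$'s cancel, as you note), then multiply by the $O(1/\eps^2)$ choices of $j$ per scale. The only differences from the paper's proof are immaterial additive constants in the count of admissible $i$.
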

\begin{proof}
    Let $i_{\text{max}}$ denote the largest integer $i$ where $2^{-i} \eta \cdot \frac{2}{\eta^2} \ge \beta x$; observe $i_{\text{max}} \le \log(\nicefrac{2}{\eta \beta x})$. Let $i_{\text{min}}$ denote the smallest integer $i$ where $2^{-i} \eta \le x$; observe $i_{\text{min}} \ge  \log(\eta/x)$. It follows that any $y \in \G \cap [\beta  x,x]$ can be written as some $2^{-i} \eta j$ where $i \in [i_{\text{min}}, i_{\text{max}}]$; implying the number of potential $i$ is bounded by $i_{\text{max}}-i_{\text{min}}+2 \le \log(\nicefrac{1}{\eta^2 \beta}) + 3$. Since there are at most $2/\eta^2$ options for $j$, this gives the bound $(\log(\nicefrac{1}{\eta^2 \beta}) + 3) \cdot (2/\eta^2) = O(\log(\nicefrac{1}{\eta \beta})/\eta^2)$.
\end{proof}

 By invoking \cref{claim:bound-G-range} with $x=1$ and $\beta = \tau$, we can bound the DP-state options for (i) by $O(\log(\nicefrac{1}{\eta \tau})/\eta^2)$. 
 
 Invoking \cref{claim:bound-G-range} with $x=M$ and $\beta = \alpha$, there are at most $O(\log(\nicefrac{1}{\eta \alpha})/\eta^2)$ elements in $[\alpha M, M] \cap \G $. Each distribution's counts have at most $(1 + 1/\tau)^{|[\alpha M, M] \cap \G |}$ options, meaning the DP-state options for (ii) are bounded by $(1 + 1/\tau)^{m|[\alpha M, M] \cap \G |} \le (1/\tau)^{O(m \log(\nicefrac{1}{\eta \alpha})/\eta^2 )}$.

 Combining both bounds, the number of DP-states is at most $(1/\tau)^{O(m \log(\nicefrac{1}{\eta \alpha})/\eta^2 )}$. Later we will set $\tau = \poly(\eta/n)$ and $\alpha = \poly(\eta)$, meaning this is a desirable bound.

\textit{\underline{Dynamic program: first call.}} We will invoke the dynamic program with starting DP-state $M_0=1$ which is the largest value of $\G$, and with each $S_i = [\alpha M_0, M_0] \cap S_i^{\text{start}}$. This begins our dynamic program with the desired invariant, and we will observe that each possible transition will preserve the invariant.

\textit{\underline{Dynamic program: transitions.}} Whenever none of $S_1,\dots,S_m$ contains $M$, we will transition to the DP-state with $M'$, where $M'$ is the largest value of $\G$ that is smaller than $M$. If $M' \ge \tau/\alpha$, then for each set $S_i$ we will add the values of $[\alpha M', \alpha M) \cap S_i^{\text{start}}$. Else, for each set $S_i$ we will add the values of $[\tau, \alpha M) \cap S_i^{\text{start}}$, and the rest is handled later in the base case.

Otherwise, one of the multisets contains $M$, and let $S_{i^*}$ be an arbitrary such multiset. From here, we will consider either splitting this distribution state of size $M$ in half, or creating a coupling state involving it.

For the splitting option, we will simply consider recursing into the dynamic program where $S_{i^*}$ eliminates a copy of $M$, and adds two copies of $M/2$. Note how our set $\G$ was chosen such that for any $M \in \G$, then it holds that $M/2 \in \G$. Thus, our desired invariant holds.

For the coupling option, we will only consider making a state of size $z \in \G \cap [\eta M, M]$. By \cref{claim:bound-G-range}, there are at most $O(\log(\nicefrac{1}{\eta})/\eta^2)$ options for $z$. Then, we will choose the size of distribution state to couple from each distribution. Let $x_1,\dots,x_m$ denote the size chosen for each distribution. Each $x_i$ must satisfy $x_i \in S_i$ and $x_i \ge z$; additionally $x_{i^*}=M$. By \cref{claim:bound-G-range}, there are $O(\log(\nicefrac{1}{\eta})/\eta^2)^{m-1}$ options for the vector $x$; hence our DP has a total of $O(\log(\nicefrac{1}{\eta})/\eta^2)^m$ possible transitions. Still, after having chosen $z$ and $x_1,\dots,x_m$, we must design a way to create this coupling state and recursively call the DP in a way that maintains our invariant. For example, naively modifying each distribution state $x_i$ to be $x_i - z$, would be invalid because often $x_i - z \notin \G$. Our goal will be to split every $x_i-z$ into two terms $a_i + b_i$, where each $a_i,b_i$ are in $\G$, and are either $0$ or at least $\eta^3M/2$. The last property is crucial, because if we choose $\alpha \le \eta^3/2$ then this implies our invariant still holds. Hence, we will choose $\alpha = \eta^3 / 2$.

Let us define a function $\match: \G^{m+1} \rightarrow (\G \cup \{0\})^{2m}$ where $\match(z,x_1,\dots,x_m)$ outputs $ a_1,\dots, a_m, b_1,\dots,b_m$. We will show such a function exists with our desired properties:

\begin{claim}\label{claim:ab-exist}
    There exists a function $\match: \G^{m+1} \rightarrow (\G \cup \{0\})^{2m}$, such that for any $z,x_1,\dots,x_m \in \G$ where $M = \max_i x_i$ and $\eta M \le z \le \min_i x_i$, then it holds that $\match(z,x_1,\dots,x_m)$ outputs  $ a_1,\dots, a_m, b_1,\dots,b_m$, where all (i) $a_i,b_i \in \G \cup \{0\}$, (ii) $x_i = z + a_i + b_i$, (iii) $a_i,b_i$ are either zero or at least $\eta^3 M/2$, and (iv) $a_i \le \eta z$.
\end{claim}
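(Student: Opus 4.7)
The plan is to construct $(a_i, b_i)$ for each index $i \in \{1,\dots,m\}$ independently, which reduces the task to the following atomic subproblem: given $x, z \in \G$ with $\eps M \le z \le x$, find $a, b \in \G \cup \{0\}$ with $x = z + a + b$, $a \le \eps z$, and each of $a, b$ either zero or at least $\eps^3 M/2$. I would first dispose of the trivial case $x = z$ by setting $a = b = 0$, and henceforth assume $x > z$.

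The critical move is to choose the ``right'' representations of $x$ and $z$ inside $\G$. I would pick $i_x, i_z$ as \textit{large as possible} in the representations $x = 2^{-i_x}\eps j_x$ and $z = 2^{-i_z}\eps j_z$; since $\eps$ is a power of $2$, maximality forces $j_x, j_z \in [1/\eps^2,\, 2/\eps^2 - 1]$ (otherwise doubling $j$ would keep it in range). Writing $u_x := 2^{-i_x}\eps$ and $u_z := 2^{-i_z}\eps$ for the corresponding fineness units, this choice squeezes both into the window $(\eps^3 M/2,\, \eps^2 M]$, and combined with $z \ge \eps M$ gives the uniform bound $u_x, u_z \le \eps z$. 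These inequalities will supply properties (iii) and (iv) almost automatically.

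I would then split on which of $x, z$ has the finer representation. If $i_x \ge i_z$ (so $z$ is already an integer multiple of $u_x$), the difference satisfies $x - z = u_x(j_x - j_z \cdot 2^{i_x - i_z})$, where the parenthesized quantity is a positive integer bounded above by $j_x \le 2/\eps^2 - 1$; this places $x - z$ directly in $\G$, and I set $b = x - z$, $a = 0$. If instead $i_z > i_x$, let $t = i_z - i_x \ge 1$, write $j_z = q \cdot 2^t + r$ with $r \in \{0, \dots, 2^t - 1\}$, and set $P = j_x - q$. Using $u_z \cdot 2^t = u_x$, a direct computation gives the identity
\[
x - z = u_x P - u_z r = u_x(P-1) + u_z(2^t - r),
\]
so I take $b := u_x(P-1)$ and $a := u_z(2^t - r)$. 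Positivity of $x - z$ forces $P \ge 1$, and $P - 1 \le j_x - 1 \le 2/\eps^2 - 2$, so $b \in \G \cup \{0\}$; the ratio bound $2^t = u_x/u_z \le 2/\eps \le 2/\eps^2 - 1$ (valid since $\eps \le 1/512$) together with $2^t - r \ge 1$ places $a \in \G$; and the fineness bounds supply the remaining size constraints (iii) and (iv), with $a \le u_z \cdot 2^t = u_x \le \eps z$.

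The main obstacle I anticipate is finding a representation and decomposition that simultaneously satisfies all four properties. Using the maximal-$i$ form is precisely what reconciles (iii) and (iv): it pins the fineness units $u_x, u_z$ into a narrow multiplicative window so that any nonzero multiple of a fineness unit is both at most $\eps z$ (bounding the small piece $a$) and at least $\eps^3 M/2$ (meeting the required lower size bound), while also keeping the scale ratio $2^t = u_x/u_z$ small enough that $2^t - r$ stays inside the admissible $j$-range of $\G$. Once this alignment is in place, the two-piece decomposition in the hard case is essentially forced, and the remaining algebraic verifications reduce to routine arithmetic.
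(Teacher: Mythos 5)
Your construction is correct and is essentially the paper's: you both pass to the maximal-form representation $x = 2^{-i_x}\eps j_x$, $z = 2^{-i_z}\eps j_z$ with coefficients forced into $[\nicefrac{1}{\eps^2},\nicefrac{2}{\eps^2})$, observe that this pins the unit sizes $u_x, u_z$ to within a factor of $\eps^2$ of $x,z$ respectively (simultaneously delivering the lower bound in (iii) and the upper bound in (iv)), and then take $b$ to be (essentially) the largest multiple of the coarser unit $u_x$ fitting in $x - z$, with $a$ the remainder expressed in the finer unit $u_z$. Your explicit division-with-remainder formula $a = u_z(2^t - r)$, $b = u_x(P-1)$ coincides with the paper's ``largest $B$ with $2^{-j}\eps B \le x-z$'' choice except in the boundary case $r=0$ (where you shift one unit of $u_x$ from $b$ into $a$, which is still valid), so there is no substantive difference in approach.
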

\begin{proof}
    Note that property (iv) is not immediately relevant, but it will be used in our approximation error analysis.
    
    We will first aim to find valid $a_k,b_k$ with the correct sum. For any value $x_k \in \G$, it may be written as $2^{-i} \eta K$ for integers $i,K$ where $1 \le K \le \nicefrac{2}{\eta^2}-1$, but there are sometimes multiple combinations of $i,K$ that yield value $x_k$. We define the \textit{maximal form} of $x_k \in \G$ as the pair where $i$ is maximized. Note how in maximal form, the coefficient $K \ge 1/\eta^2$. Write the maximal forms of $z,x_k$ as $z = 2^{-i} \eta K$ and $x_k = 2^{-j} \eta L$. If $i=j$, then $x_k-z \in \G \cup \{0\}$, so set $a_k=0$ and $b_k = x_k-z$.

    Otherwise, $i > j$. We will show it is possible to write $x_k-z$ in the form $2^{-i} \eta A + 2^{-j} \eta B$, where $a_k = 2^{-i} \eta A$ and $b_k = 2^{-j} \eta B$. 

    Choose $B$ to be the largest integer where $2^{-j} \eta B \le x_k-z$. Clearly $0 \le B < L$, so $b_k$ is either $0$ or in $\G$.
    
    The remainder, $x_k-z-b_k$, is certainly a non-negative integer multiple of $2^{-i} \eta$, and moreover the remainder is strictly less than $2^{-j} \eta$. We will conclude the remainder is at most $\eta z$, which would imply both  $a_k \in \G \cup \{0\}$ (since then $0 \le A \le K$), and condition (iv) that $a_k \le \eta z$:

    \begin{align*}
        & a_k = x_k-z-b_k < 2^{-j} \eta = \frac{x_k}{L} \intertext{since $x_k$ is in maximal form, $L \ge \frac{1}{\eta^2}$:}
        & \le \eta^2 x_k \le \eta^2 M \le \eta z.
    \end{align*}

    All that remains is condition (iii). Observe how all of $a_k,b_k$ were either $0$ or at least $2^{-i} \eta$. This implies condition (iii) since
    \begin{equation*}
        2^{-i} \eta \ge \eta^2 z/2 \ge \eta^3 M /2.
    \end{equation*}
    Finally, note how computing $\operatorname{Match}(z,x_1,\dots,x_m)$ runs in polynomial time, as computing each $a_k,b_k$ is just constant-time casework given the maximal forms of $z$ and $x_k$, and it is simple to compute these maximal forms (e.g. you could even afford to naively try all the $\nicefrac{2}{\eta^2}-1$ possible values for $K$). 
\end{proof}

Leveraging the guarantees of \cref{claim:ab-exist}, for the chosen  $z,x_1,\dots,x_m$ we will create a coupling state of size $z$, costing $\phi(z)$ entropy, and then will use $\match(z,x_1,\dots,x_m)$ to recursively call our dynamic program with modified sets $S_i' \triangleq S_i \backslash \{x_i\} \cup \{a_i \} \cup \{b_i \}$. Since we will choose $\alpha = \eta^3/2$, and each nonzero $a_i,b_i \ge \eta^3 M /2$, our invariant holds.

Our dynamic program will choose the minimum-entropy cost of all considered options.

\textit{\underline{Dynamic program: base case.}} Our base case is whenever $M < \tau/\alpha$. In this case, let us define the set of leftover distribution states as $S_i^{\text{leftover}} \triangleq S_i \cup (S_i^{\text{start}} \cap (0, \tau ))$. It is known that the \textit{maximum-entropy coupling} is simply the coupling where the distributions are independent. This means no matter how the leftover distribution states $S_1^{\text{leftover}},\dots,S_m^{\text{leftover}}$ are coupled, the entropy will be at most $\sum_i H(S_i^{\text{leftover}})$. We will choose to have our dynamic program return this upper bound $\sum_i H(S_i^{\text{leftover}})$. In a post-processing of our dynamic program, we will use some well-known greedy coupling method (e.g. \cite{kocaoglu2017entropic}) to handle the leftover distribution states (and this will certainly have entropy at most that of the maximum-entropy coupling upper bound the dynamic program returned). 

\textit{\underline{Constructing the coupling.}} We will sketch how to construct a coupling (beyond just knowing the entropy value) with entropy upper bounded by the cost returned by the dynamic program. It is a standard technique to ``trace through'' a dynamic program to construct the corresponding solution. In typical fashion, start at the first call of the dynamic program. Then, whichever is the minimum-cost transition (either splitting the largest distribution state, or making a coupling state and adjusting the multisets), take the same corresponding action in the coupling construction. Eventually, when reaching the base case, use the greedy coupling of \cite{kocaoglu2017entropic} to couple the leftover distribution states. As argued in the previous paragraph, this greedy coupling must have entropy bounded by the base case value returned by the dynamic program. The details of the greedy coupling algorithm are inconsequential, all we leverage is that it produces a valid coupling and has bounded running time (for this one, it is $O(m^2 n \log(n))$ time); there are multiple alternative coupling algorithms that would also satisfy these properties. For any base case, the multisets $S_1^{\text{leftover}},\dots,S_m^{\text{leftover}}$ will each have size at most $1/\tau + 2n$, since all values will be at least $\tau$, other than at most $ 2n$ leftover distribution states of size $<\tau$ from the rounding preprocessing step; this means the greedy coupling runs in time $O(m^2(n + 1/\tau) \log(n + 1/\tau))$. Note also how our algorithm may have split distribution states, and the corresponding coupling states should be mapped back to the original distribution state from the input; this may cause some coupling states to be merged, but this would only decrease entropy.

\textit{\underline{Running time.}} The total running time is $(1/\tau)^{O(m \log(\nicefrac{1}{\eta \alpha})/\eta^2 )}$, which is dominated by the number of DP-states. A pedantic point is that technically it is not known how to exactly compute the simple expression $\phi(x) = x \log(1/x)$ in polynomial time. See \cref{appendix:precise} for some further discussion, where we explain how it is sufficient to approximately compute these terms. For simplicity, we will write the algorithm as if the entropy terms are being computed exactly, but nothing fundamentally changes when considering the true version (the runtime bound remains the same).

Pseudocode is provided in \cref{alg:preprocess,alg:fast}. For context, recall that we will choose $\alpha = \eta^3 / 2$.

\begin{algorithm}[]
    \caption{Preprocessing for fast algorithm}
   \label{alg:preprocess}
\begin{algorithmic}[1]
    \State {\bfseries Input:} Discrete probability distributions $p_1,\dots,p_m$ each with at most $n$ states.
    \Procedure{Rounding}{$x$}
    \If{$x < \tau$}
    \Return $\{x\}$
    \EndIf
    \State $y \gets $ largest value of $\G$ that is not greater than $x$
    \State \Return $\{y \} \cup  \operatorname{Rounding}(x-y)$ \Comment{This is a multiset}
    \EndProcedure
    \Procedure{Processing}{$p_1,\dots,p_m$}
    \For{$i \in \{1,\dots,m\}$}
    \State $S_i^{\text{start}} \gets \emptyset$ \Comment{Multiset of values for $p_i$}
    \For{$j \in \{1,\dots,n\}$}
    $S_i^{\text{start}} \gets S_i^{\text{start}} \cup \operatorname{Rounding}(p_i(j))$
    \EndFor
    \EndFor
    \State $M_0 \gets 1$ 
    \For{$i \in \{1,\dots,m\}$}
    $S_i \gets S_i^{\text{start}} \cap [\alpha M_0,M_0]$
    \EndFor
    \State couplingCost $\gets \operatorname{FastCoupling(M_0,S_1,\dots,S_m)}$
    \UnindentedComment{Construct a coupling with cost at most couplingCost by following the steps in \textit{``Constructing the coupling''}}
    \EndProcedure
\end{algorithmic}
\end{algorithm}

\begin{algorithm}[]
    \caption{Dynamic programming algorithm}
   \label{alg:fast}
\begin{algorithmic}[1]
    \State {\bfseries Input:} $M \in \G$. Multisets $S_1,\dots,S_m$ of elements of $\G$ detailing the remaining states of $p_1,\dots,p_m$ in $\G \cap [\alpha M, M]$.
    \Procedure{FastCoupling}{$M,S_1,\dots,S_m$} 
    \If{DP[$M,S_1,\dots,S_m$] already computed}
        \Return DP[$M,S_1,\dots,S_m$]
    \EndIf
    \If{$M <  \tau / \alpha$}
        \For{$i \in \{1,\dots m\}$} 
        $S_i^{\text{leftover}} \gets S_i \cup (S_i^{\text{start}} \cap (0,\tau))$
        \EndFor
        \Return $\sum_{i=1}^m H(S_i^{\text{leftover}})$
    \EndIf
    \If{$M \notin (S_1 \cup \dots \cup S_m)$}
        \State $M' \gets $ largest element of $\G$ less than $M$
        \For{$i \in \{1,\dots m\}$} 
        $S_i' \gets S_i \cup (S_i^{\text{start}} \cap [\max(\tau,\alpha M'),\alpha M))$
        \EndFor
        \State DP[$M,S_1,\dots,S_m$] $ \gets $ $\operatorname{FastCoupling}(M',S_1',\dots,S_m')$
        \State \Return DP[$M,S_1,\dots,S_m$]
    \EndIf
    \State $i^* \gets $ arbitrary value satisfying $M \in S_{i^*}$
    \UnindentedComment{Consider splitting the $M$-size state of $p_{i^*}$ in half}
    \State DP[$M,S_1,\dots,S_m$] $\gets \operatorname{FastCoupling}(M,S_1,\dots,S_{i^*} \backslash \{M\} \cup \{M/2\} \cup \{M/2\},\dots,S_m)$
    \For{$z \in \G \cap [\eta M,M]$, $x_1,\dots,x_m$ where $x_i \in S_i \cap [z, M]$ and $x_{i^*}=M$} 
    \UnindentedComment{Consider coupling $z$ mass, where the coupling involves a state of size $x_i$ from distribution $p_i$}
    \State $a_1,\dots,a_m,b_1,\dots,b_m \gets \match(z,x_1,\dots,x_m)$
    \For{$i \in \{1,\dots m\}$} 
    $S_i' \gets S_i \backslash \{x_i\} \cup \{a_i\} \cup \{b_i\}$ \Comment{Ignore any $\{0\}$ added to the multiset}
    \EndFor
    \State DP[$M,S_1,\dots,S_m$] $\gets \min(\text{DP}[M,S_1,\dots,S_m], \phi(z) + \operatorname{FastCoupling}(M,S_1',\dots,S_m')) $
    \EndFor
    \State \Return DP[$M,S_1,\dots,S_m$].
    \EndProcedure
\end{algorithmic}
\end{algorithm}

\section{Algorithm approximation error analysis} \label{sec:analysis} 
In the previous section we have discussed an efficient algorithm which produces valid couplings, but we have not justified its approximation error. We will show the approximation error has some bound in terms of $\eta$ and $m$, and then eventually we will invoke \cref{alg:fast} with a smaller $\eta$ defined in terms of $\eps$ and $m$ to conclude an additive approximation error of $\eps$.

Our general proof technique will start with some coupling $\sol$ where $H(\sol)=H(\opt)$, and iteratively modify it in a way such that it can be consistent with a path in the dynamic program of \cref{alg:fast}. As we modify the solution, we will call this modified coupling $\sol'$, and we will show the final $H(\sol')$ is not much larger than $H(\sol)$. This will upper bound the value returned by the dynamic program, which in turn bounds the entropy of our coupling.

The process of modifying $\sol$, so that it may be found by the dynamic program, is itself quite algorithmic. There are four categories of actions taken in the process of \cref{alg:preprocess,alg:fast} after which we must adjust $\sol$:

\begin{enumerate}
    \item \textit{Preprocessing.} During the preprocessing phase, input distribution states $p_i(j)$ are rounded to values in $\G$ (and one value not in $\G$ but $<\tau$), and in the process are split into multiple distribution states. We must split coupling states of $\sol'$ so they are consistent with this new split version of distribution states. Our main intuition for bounding the increase of $H(\sol')$ is that because the sequence of rounded values for the new distribution states is bounded by a geometric series with large ratio, this will not influence entropy much.
    \item \textit{Matching.} Whenever we choose a matching of size $z$ in the dynamic program, this will correspond to our attempt to simulate a coupling state of size $q \in \sol'$ where $q-z \le 2 \eta z$. Still, this is not exactly creating a matching of size $q$, and it is also splitting up distribution states into $x_i-z=a_i+b_i$, so we must modify $\sol'$ to be consistent. Our main intuition for bounding the increase of $H(\sol')$ is that because $q \approx z$, this is close to simulating a coupling state in $\sol'$.
    \item \textit{Splitting.} When splitting a distribution state of size $M$ in half, this causes a need to modify $\sol'$ as well. This is the most difficult source of approximation error we need to bound, and the charging method is nuanced. Very roughly, our charging argument will maintain a sorting of the coupling states of $\sol'$ inside each distribution state of $S_1,\dots,S_m$ (with an atypical sorting order), and it will be beneficial to split the state of $\sol'$ at the midpoint of this sorting. 
    \item \textit{Leftover.} In the base case, we upper bound the error in the remaining coupling by $\sum_{i=1}^m H(S_i^{\text{leftover}})$. We need to show that this quantity is not much larger than the entropy of the remaining uncoupled portions of $\sol'$. Our main intuition for this is to directly bound the quantity $\sum_{i=1}^m H(S_i^{\text{leftover}})$, leveraging some structure in our $\sol'$ modification procedure that implies there are not too many leftover states in $\sol'$. If we could only bound that there were at most, say, $1/\tau$ states with size at most $\tau$ in the end, then the entropy could be quite large; we will show the number of states at the end (of our particular solution path in the DP) is much smaller.
\end{enumerate}

We will now discuss how we modify $\sol$ for each type of action, and bound the resulting approximation error. Throughout this section, it is informative to view couplings pictorially; see \cref{fig:coupling,fig:split-coupling}.

\begin{figure}[H]
    \centering
    \includegraphics[width=\textwidth]{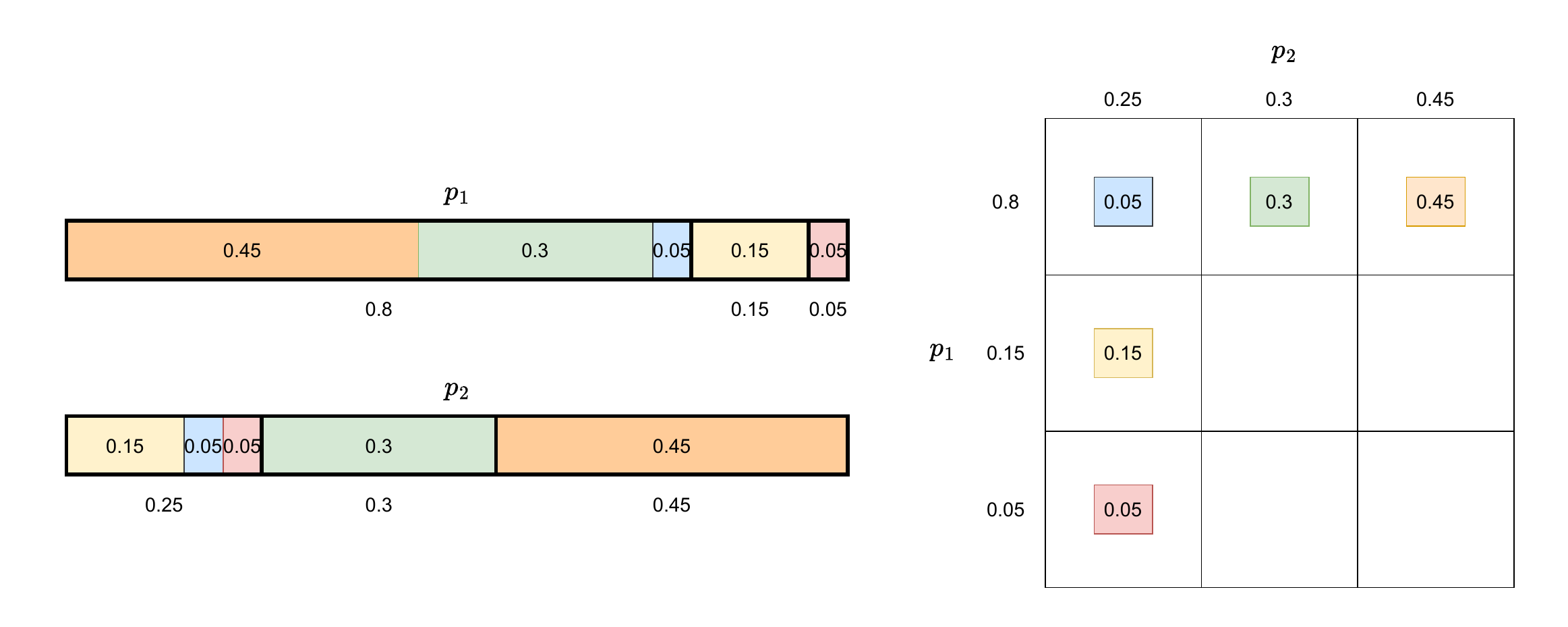}
    \caption{Coupling between two distributions $p_1 = [0.8,0.15,0.05]$ and $p_2 = [0.25,0.3,0.45]$. The left half of the figure has a rectangular depiction of the distributions, where the bold borders represent the states of the distributions, and the subdivisions by colors represent states of the coupling. The right half of the figure gives the analogous table view of the coupling.}
    \label{fig:coupling}
\end{figure}

\begin{figure}[H]
    \centering
    \includegraphics[width=\textwidth]{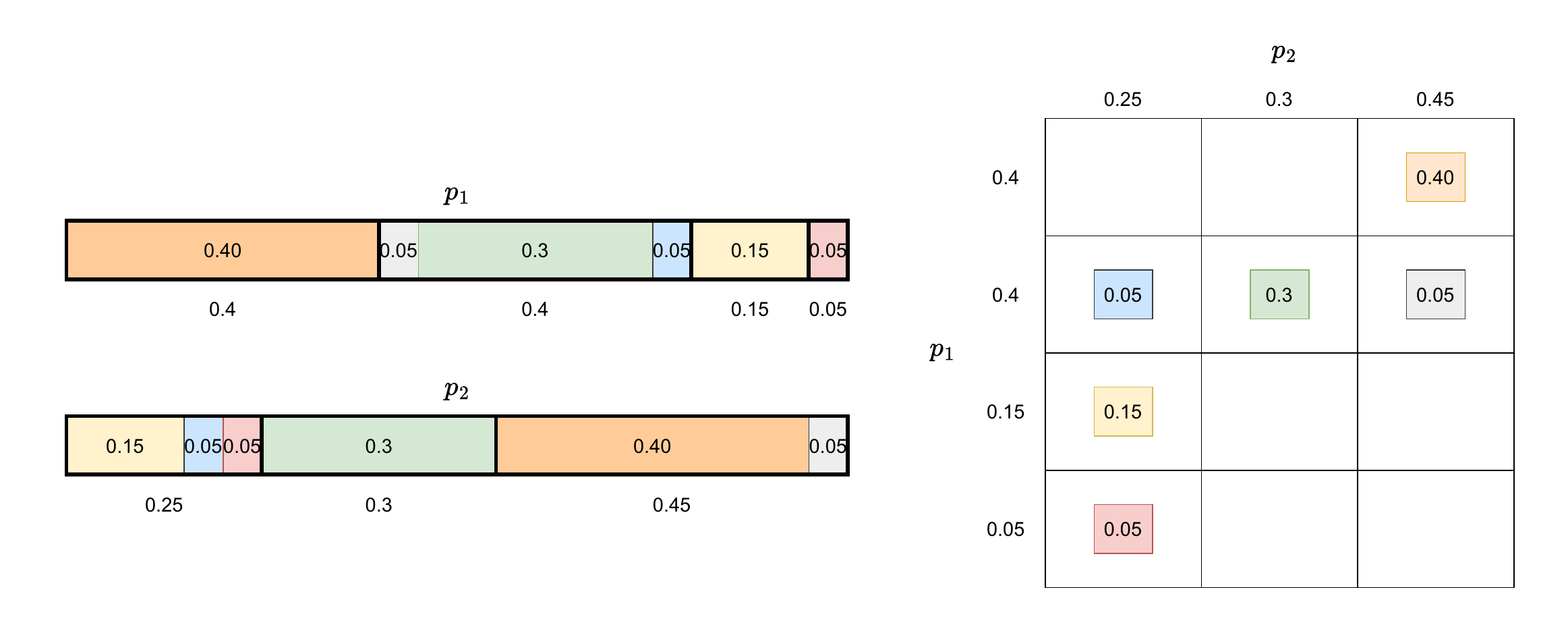}
    \caption{Depiction of the coupling in \cref{fig:coupling} after the first state is split in half. The $p_1$ state of size $0.8$ is split into two states of size $0.4$, thus also splitting the coupling state of size 0.45 (orange) into coupling states of size $0.4$ (orange) and $0.05$ (gray).}
    \label{fig:split-coupling}
\end{figure}

\subsection{Preprocessing}

\textbf{Modification procedure.} We start with any optimal coupling $\sol$ for the input distributions. In the preprocessing stage of \cref{alg:preprocess}, our modifications to $\sol'$ will be straightforward. Any time a distribution state of size $x$ is split into two distribution states of size $x-y$ and $y$, we will simply split $\sol'$ at the border, as depicted in \cref{fig:split-coupling}.

\textbf{Entropy increase analysis.} We start by stating some simple bounds on the increase in entropy when splitting a state:
\begin{claim}\label{claim:simple1}
    Consider splitting a probability mass $x$ into two states $cx$ and $(1-c)x$ for $c \le \nicefrac{1}{2}$. The entropy is bounded by
    \begin{equation*}
        \phi((1-c)x) + \phi(cx) \le \phi(x) + cx/\ln(2) + cx \log(1/c).
    \end{equation*}
\end{claim}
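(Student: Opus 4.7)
The plan is to directly compute the left-hand side using $H(y) = y\log(1/y)$, isolate $H(x)$, and bound the residual via a standard logarithm inequality.

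First I would expand, using $\log(1/(zx)) = \log(1/z) + \log(1/x)$:
\begin{align*}
H((1-c)x) + H(cx) &= (1-c)x\bigl[\log(1/(1-c)) + \log(1/x)\bigr] + cx\bigl[\log(1/c) + \log(1/x)\bigr] \\
&= x\log(1/x) + (1-c)x\log(1/(1-c)) + cx\log(1/c) \\
&= H(x) + (1-c)x\log(1/(1-c)) + cx\log(1/c).
\end{align*}
Matching this against the claimed bound, it suffices to show $(1-c)x\log(1/(1-c)) \le cx/\ln(2)$, or equivalently (after canceling $x>0$ and converting to natural logarithm) $(1-c)\ln(1/(1-c)) \le c$.

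Next I would deduce this from the standard inequality $\ln(1+y) \le y$ for $y \ge 0$, applied with $y = c/(1-c)$:
\begin{equation*}
\ln\bigl(1/(1-c)\bigr) = \ln\bigl(1 + c/(1-c)\bigr) \le c/(1-c),
\end{equation*}
and multiplying through by $(1-c) > 0$ gives $(1-c)\ln(1/(1-c)) \le c$, which closes the argument.

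There is no substantive obstacle here; the proof is a short algebraic manipulation followed by one appeal to a well-known logarithm inequality. One observation worth noting is that the hypothesis $c \le \nicefrac{1}{2}$ is not actually needed for the stated bound itself — the inequality holds for every $c \in [0,1)$. Presumably the $c \le \nicefrac{1}{2}$ assumption is included because later applications of \cref{claim:simple1} treat $cx$ as the smaller of the two pieces (so that $cx\log(1/c)$ plays the role of the ``small'' contribution), but it does not affect the derivation above.
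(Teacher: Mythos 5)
Your proof is correct and matches the paper's argument essentially verbatim: both expand the entropies exactly and then bound $(1-c)\log(1/(1-c)) \le c/\ln(2)$ via $\ln(1+y)\le y$ with $y=c/(1-c)$. Your side remark that $c\le \nicefrac{1}{2}$ is not needed for this inequality is also accurate; it matters only for the downstream corollaries that treat $cx$ as the smaller piece.
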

\begin{proof}
    \begin{align*}
        & \phi((1-c)x) + \phi(cx) = \phi(x) + (1-c)x \log\left(\frac{1}{1-c}\right) + cx \log\left( \frac{1}{c} \right) \\
        & \le \phi(x) + (1-c) x \cdot \frac{c}{(1-c)\ln(2)} +cx \log(1/c) = \phi(x) + cx/\ln(2) + cx \log(1/c) \quad\qedhere
    \end{align*}
\end{proof}

\begin{corollary}\label{cor:split-c}
    Consider splitting a probability mass $x$ into two states, where the smaller state is at most $cx$ for $c \le 1/e$. Then, the entropy is bounded by 
    \begin{equation*}
        \phi((1-c)x) + \phi(cx) \le \phi(x) + cx/\ln(2) + cx \log(1/c).
    \end{equation*}
\end{corollary}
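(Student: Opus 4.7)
The plan is to deduce the corollary from Claim 4.1 by a one-variable monotonicity argument. Claim 4.1 already gives the desired bound in the special case where the split is exactly into $c x$ and $(1-c)x$ pieces, so the only extra work is handling the case where the smaller piece is strictly smaller than $c x$.

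First I would let $c' \in [0, 1/2]$ denote the fraction such that the actual split is into $c' x$ and $(1 - c') x$, so by hypothesis $0 \le c' \le c \le 1/e < 1/2$. Since $c' \le 1/2$, Claim 4.1 applies with $c'$ in place of $c$, yielding
\begin{equation*}
H((1-c')x) + H(c'x) \le H(x) + c' x/\ln(2) + c' x \log(1/c').
\end{equation*}
It then suffices to show that the right-hand side is a non-decreasing function of $c'$ on the relevant range, so that we may replace $c'$ by $c$ without decreasing the bound.

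Define $f(t) = t/\ln(2) + t \log(1/t) = t/\ln(2) - t \log(t)$ and differentiate to get $f'(t) = 1/\ln(2) - \log(t) - 1/\ln(2) = \log(1/t)$, which is non-negative for every $t \in (0,1)$. Therefore $f$ is non-decreasing on $(0,1)$, and since $c' \le c < 1$, we have $f(c') \le f(c)$. Multiplying by $x$ and chaining the inequalities gives the corollary. There is no real obstacle here; the assumption $c \le 1/e$ is used only to guarantee $c \le 1/2$ so that Claim 4.1 can be invoked (and presumably the tighter constant is needed where the corollary is later applied).
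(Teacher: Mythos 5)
Your proposal is correct and is essentially the paper's own argument: the paper's proof is the one-line observation that the bound of \cref{claim:simple1} is non-decreasing in $c$ on the relevant range, and you simply make this explicit by computing $f'(t)=\log(1/t)\ge 0$. Your derivative computation is right (in fact it shows monotonicity holds on all of $(0,1)$, so the paper's restriction to $c\le 1/e$ is merely a sufficient condition), and the reduction to the actual split fraction $c'\le c$ via \cref{claim:simple1} is exactly what is intended.
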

\begin{proof}
    This follows from how the bound in \cref{claim:simple1} is non-decreasing for $c \le 1/e$.
\end{proof}

\begin{corollary} \label{cor:split-y-bound}
    Consider splitting a probability mass $x$ into two states, where one state is $y$ and the other state is $x-y$. Then, the entropy is bounded by
    \begin{equation*}
        \phi(x-y) + \phi(y) \le \phi(x) + ey + y \log(x/y).
    \end{equation*}
\end{corollary}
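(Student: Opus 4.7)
The plan is to prove the inequality by expanding $H(x-y) + H(y) - H(x)$ into its standard closed form and bounding one piece of it by $ey$. Specifically, since $H(z) = z \log(1/z)$ and $(x-y) + y = x$, direct expansion yields the identity
\[
H(x-y) + H(y) \;=\; H(x) \;+\; (x-y)\log\!\frac{x}{x-y} \;+\; y\log\!\frac{x}{y}.
\]
Given this identity, it suffices to establish $(x-y)\log(x/(x-y)) \le ey$ for all $y \in (0,x]$.

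For the key bound, I would use the elementary calculus inequality $-\ln(1-t) \le t/(1-t)$, valid for $t \in [0,1)$ (proved by noting that $g(t) := t/(1-t) + \ln(1-t)$ satisfies $g(0)=0$ and $g'(t) = t/(1-t)^2 \ge 0$). Setting $t = y/x$ and converting to base-$2$ logarithms gives
\[
(x-y)\log\!\frac{x}{x-y} \;=\; -\,\frac{(x-y)\,\ln(1-y/x)}{\ln 2} \;\le\; \frac{(x-y)\cdot(y/x)}{(1-y/x)\ln 2} \;=\; \frac{y}{\ln 2} \;\le\; ey,
\]
where the final inequality uses $e \ln 2 > 1$. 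The degenerate case $y = x$ is trivial since $H(0) = 0$ and the claimed bound becomes $H(x) \le H(x) + ex$. Combining with the identity above completes the proof.

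One might hope to deduce the statement directly from \cref{cor:split-c} by setting $c = y/x$, but that corollary requires $c \le 1/e$ and so would leave the range $c \in (1/e, 1/2]$ uncovered (and a symmetric application with $c = (x-y)/x$ still leaves a gap around $y/x \in (1/e, 1 - 1/e)$). For this reason the direct decomposition approach is cleaner. There is no real obstacle here; the whole result is a one-line calculus inequality wrapped around a standard entropy identity, and the constant $e$ is comfortably loose (the proof actually yields the stronger $y/\ln 2$).
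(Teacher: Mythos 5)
Your proof is correct, and it takes a different route from the paper. The paper proves this corollary by a case split: for $y \le x/e$ it invokes \cref{cor:split-c} (whose bound $y/\ln 2 + y\log(x/y)$ is then relaxed to $ey + y\log(x/y)$), and for $y > x/e$ it abandons the fine-grained estimate entirely and uses concavity of entropy, $H(x-y)+H(y) \le 2H(x/2) = H(x) + x \le H(x) + ey$. You instead give a single unified computation valid for all $y \in (0,x]$: expand the exact identity $H(x-y)+H(y) = H(x) + (x-y)\log\frac{x}{x-y} + y\log\frac{x}{y}$ and bound the first correction term by $y/\ln 2$ via $-\ln(1-t) \le t/(1-t)$. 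This is essentially the same elementary inequality already used inside the proof of \cref{claim:simple1}, but applied directly rather than filtered through \cref{cor:split-c}; your observation that \cref{cor:split-c} alone cannot cover the range $y/x \in (1/e, 1-1/e)$ is exactly the gap the paper's concavity case is there to fill. Your argument is arguably cleaner, avoids the case analysis, and yields the slightly stronger constant $1/\ln 2 \approx 1.44$ in place of $e$; the paper's version has the minor advantage of reusing its already-stated corollary and dispatching the large-$y$ regime with a one-line concavity bound. Both are valid, and the constant $e$ is loose either way. Your handling of the degenerate endpoint $y=x$ (where $t/(1-t)$ blows up) is also correct.
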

\begin{proof}
    \textbf{Case 1: $y \le x/e$. } By \cref{cor:split-c}, we have a bound of
    \begin{equation*}
        \phi(x-y) + \phi(y) \le \phi(x) + y/\ln(2) + y \log(x/y) 
    \end{equation*}

    \textbf{Case 2: $y > x/e$. } By concavity of entropy
    \begin{equation*}
        \phi(x-y) + \phi(y) \le 2\phi(x/2) = \phi(x) + x \le \phi(x) + ey \qedhere
    \end{equation*}
\end{proof}

These bounds on entropy increase from splitting will be enough to prove an upper bound on the increase in entropy of $\sol'$ throughout preprocessing:

\begin{lemma}[Rounding initial input to $\G$]\label{lemma:rounding-error}
    After rounding a distribution with the preprocessing procedure in \cref{alg:preprocess}, the entropy of $\sol'$ will increase by at most $2 \eta$.
\end{lemma}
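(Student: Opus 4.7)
The plan is to charge the increase in $H(\sol')$ to the coupling-atom splits induced by the preprocessing and show that each round's contribution decays as a geometric series in $\eps^2$. Fix one distribution state of initial mass $x$; let $x_0 = x$, let $y_k \in \G$ be the piece carved off in round $k$, and let $x_{k+1} = x_k - y_k$, so by \cref{eq:rounding-rem} we have $x_{k+1} \le \eps^2 x_k$. Pictorially (see \cref{fig:coupling,fig:split-coupling}), round $k$ corresponds to a single cut of the one-dimensional interval of mass $x_k$ at position $y_k$; such a cut splits at most one coupling atom of $\sol'$. Call its mass $q^{(k)} \le x_k$ and its smaller resulting piece $q_2^{(k)} \le \min(q^{(k)}/2, x_{k+1}) \le x_{k+1}$.

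Next I would apply \cref{cor:split-y-bound} to each split, bounding its contribution by $\Delta_k \le e\, q_2^{(k)} + q_2^{(k)} \log(q^{(k)}/q_2^{(k)})$. A short case split on whether $q^{(k)} > 2 x_{k+1}$ (where one plugs in $q_2^{(k)} \le x_{k+1}$ and $q^{(k)} \le x_k$, using that $z \log(1/z)$ is increasing on $[0, 1/e]$) or $q^{(k)} \le 2 x_{k+1}$ (where the entire atom already has mass $O(x_{k+1})$, so the increase is trivially $O(x_{k+1})$) yields, in both cases,
\[
\Delta_k \;\le\; e\, x_{k+1} + x_{k+1} \log(x_k / x_{k+1}).
\]
Setting $r_k := x_{k+1}/x_k \le \eps^2 \le 1/e$ and using monotonicity of $r \log(1/r)$ on $[0, 1/e]$ gives $r_k \log(1/r_k) \le 2\eps^2 \log(1/\eps)$, which together with the hypothesis $\eps \le 1/512$ simplifies the per-round bound to $\Delta_k \le 4 \eps^2 \log(1/\eps) \cdot x_k$.

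The final step is to sum. The geometric series $\sum_{k \ge 0} x_k \le x/(1-\eps^2) \le 2x$ then gives an entropy increase of at most $8 \eps^2 \log(1/\eps) \cdot x$ from a single distribution state, and summing over all states ($\sum_x x = 1$) yields an overall increase of at most $8 \eps^2 \log(1/\eps)$, which is at most $2 \eps$ whenever $4 \eps \log(1/\eps) \le 1$ — comfortably satisfied for $\eps \le 1/512$, since then $4 \eps \log(1/\eps) \le 36/512 < 1$. The hard part will be the case analysis for $\Delta_k$ when $q^{(k)}$ is much smaller than $x_k$, so that Case~2 of \cref{claim:simple1} (invoked inside \cref{cor:split-y-bound}) dominates; the rest, especially the claim that only one coupling atom is split per round, follows directly from the one-dimensional picture.
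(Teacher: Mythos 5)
Your proof is correct and follows essentially the same approach as the paper: track the geometric decay $x_{k+1} \le \eps^2 x_k$ from \cref{eq:rounding-rem}, bound the per-round entropy increase via \cref{cor:split-y-bound} by observing the smaller fragment is at most $x_{k+1}$, use monotonicity of $d \mapsto d\log(x_k/d)$ on $[0, x_k/e]$, and sum a geometric series. The only cosmetic difference is the case split on $q^{(k)}$ versus $2x_{k+1}$, which is unnecessary — bounding $q^{(k)} \le x_k$ first and then applying monotonicity in $q_2^{(k)}$ handles both cases uniformly, exactly as the paper does.
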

\begin{proof}
     In this lemma, we focus on some input distribution $p_i$. For some distribution state with initial value $x \triangleq p_i(j)$, let us track the recursive rounding process with some vector, where $x_0=x$, then $x_1$ is the value of $x-y$ in the first round, and so on. By  \cref{eq:rounding-rem}, we know $x_{i+1} \le \eta^2 x_i$. Let $\Delta_i$ denote the entropy increase to $\sol'$ from the split at phase $i$. Suppose phase $i$ caused some coupling state of $\sol'$ with mass $c_i$ to be split into $d_i$ and $c_i-d_i$ (where we choose $d_i < c_i-d_i$). Then using \cref{cor:split-y-bound}:
     \begin{align*}
         \Delta_i &\triangleq \phi(c_i-d_i) + \phi(d_i) - \phi(c_i) \le e d_i + d_i \log(c_i / d_i) \le e d_i + d_i \log(x_{i-1} / d_i) \intertext{Since $d_i \log(x_{i-1}/d_i)$ is non-decreasing in $d_i$ for $[0,x_{i-1}/e]$, and it must hold that $d_i \le x_i \le \eta^2 x_{i-1} \le x_{i-1}/e$:}
         & \le e x_i + \eta^2 x_{i-1} \log(1/\eta^2) \le \eta^2 x_{i-1} \cdot (e + \log(1/\eta^2)) 
     \end{align*}
     We may now bound the total increase in entropy from rounding one state:
     \begin{align*}
         \sum_{i \ge 1} \Delta_i &\le \sum_{i \ge 1} \eta^2 x_{i-1} \cdot (e + \log(1/\eta^2)) \\
         & \le \sum_{i \ge 1} \eta^{2i} x_0 \cdot (e + \log(1/\eta^2)) \intertext{Since the ratio of consecutive terms is less than $1/2$:}
         & \le 2 \eta^2 x_0 \cdot (e + \log(1/\eta^2)) 
     \end{align*}
     Note how the sum of the values of $x_0$ for each state of $p_i$ is $1$. This implies that after all states are rounded for some distribution, the solution entropy will increase by at most $2 \eta^2 \cdot (e + \log(1/\eta^2))$. For $\eta < \nicefrac{1}{10}$, this is at most $2 \eta$.
\end{proof}

An immediate corollary of \cref{lemma:rounding-error} is that after all $m$ distributions are preprocessed, the increase in entropy to $\sol'$ is at most $2 \eta m$.

\subsection{Setup for dynamic programming related modifications}
Our modifications occurring inside the dynamic program will be more nuanced. Observe how in \cref{fig:coupling}, the first state of $p_1$ contains three coupling states of $\sol$ (of size 0.45, 0.3, and 0.05). Within this state of $p_1$, we could order the three states of the coupling however we like. We will use a special ordering procedure for coupling states within a probability distribution state. Informing this ordering, we define four types of coupling states: \textit{regular}, \textit{important}, \textit{left-pierced}, and \textit{right-pierced}. The ordering will be the left-pierced coupling states (in an arbitrary order), followed by the regular coupling states sorted in non-increasing order, followed by the important coupling states (in an arbitrary order), followed by the right-pierced coupling states (in an arbitrary order). At the start of the dynamic program, all coupling states are considered regular. Note that the type of each coupling state is determined separately for each distribution; for example, $p_1$ may consider a coupling state to be regular, while $p_2$ considers the state important, and $p_3$ considers the state left-pierced.

We now detail which choice in the dynamic program we will choose, depending on $\sol'$. If $M< \tau/\alpha$, we will handle the base case with leftovers. Recall $i^*$ is an arbitrary value such that $M \in S_{i^*}$. If there is no such $i^*$, then our dynamic program will keep reducing $M$ until it is the size of the largest remaining distribution state (this does not affect $\sol'$). Once $i^*$ exists, let $\ind_{i^*}$ be our reference for an arbitrary remaining distribution state of $p_{i^*}$ with size $M$. Consider two cases: (i) $\ind_{i^*}$ has a coupling state in $\sol'$ of size at least $2\eta M$, or (ii) there is no such sufficiently large coupling state in $\ind_{i^*}$. In the first case, we will do a matching; in the second case, we will split $\ind_{i^*}$ in half. Let us more precisely describe how we modify $\sol'$ in these different actions.

\subsection{Matching}
\textbf{Modification procedure.} Let $q$ be the size of the coupling state in $\ind_{i^*}$ where $2 \eta M \le q \le M$ (choose an arbitrary $q$ if there are multiple). Since it may be the case that $q \notin \G$, our dynamic program may not be able to exactly choose a coupling state of size $q$. We will show the existence of a $z \in \G$ where $z \approx q$. While $z$ will be close to $q$, we will also desire to show some buffer (i.e. they are not extremely close to each other) so that we may do required adjustments from handling the extra splitting from $a_1,\dots,a_m,b_1,\dots,b_m$ (this will be more clear later). We show:
\begin{claim}\label{claim:z-exists}
    For any value $0 < q \le 1$, there exists a $z \in \G$ where $1.5\eta q \le q-z \le 2 \eta z$.
\end{claim}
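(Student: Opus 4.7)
I would start by reformulating the two desired inequalities as a single containment. The condition $q - z \le 2\eps z$ is equivalent to $z \ge q/(1+2\eps)$, and $q - z \ge 1.5 \eps q$ is equivalent to $z \le (1 - 1.5\eps)q$. So the claim reduces to showing that the interval $I_q \triangleq [q/(1+2\eps),\, (1-1.5\eps)q]$ contains at least one element of $\G$ for every $0 < q < 1$.

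Next I would pick the scale at which to look for $z$. Let $i$ be the unique integer with $2^{-i}/\eps \le q < 2^{-i+1}/\eps$; such $i$ exists for any $q>0$. At this scale, the candidates $z = 2^{-i}\eps \cdot j$ (as $j$ ranges over the integers) are evenly spaced with gap $2^{-i}\eps \le \eps^2 q$, which is much smaller than the width of $I_q$. Concretely, the width of $I_q$ equals
\begin{equation*}
    q\left[(1-1.5\eps) - \tfrac{1}{1+2\eps}\right] \ge q(0.5\eps - 4\eps^2) \ge 0.49\,\eps\, q,
\end{equation*}
where in the last step I use $\eps \le 1/512$. Since $q \ge 2^{-i}/\eps$, this width is at least $0.49 \cdot 2^{-i} > 2^{-i}\eps$, so $I_q$ must contain some integer multiple $z^\star = 2^{-i}\eps \cdot j^\star$ of the gap.

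The final step is to verify that $j^\star$ lies in the admissible range $\{1,\dots,\nicefrac{2}{\eps^2}-1\}$ so that $z^\star \in \G$. For the upper bound: $z^\star \le (1-1.5\eps)q < 2\cdot 2^{-i}/\eps$, hence $j^\star < 2/\eps^2$, i.e.\ $j^\star \le 2/\eps^2 - 1$. For the lower bound: $z^\star \ge q/(1+2\eps) \ge 2^{-i}/(\eps(1+2\eps)) \ge 2^{-i}\eps$, so $j^\star \ge 1$. Finally $z^\star \le (1-1.5\eps)q < 1$, so $z^\star \in \G \cap I_q$ as required.

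\textbf{Main obstacle.} There is no genuinely hard step here; the proof is essentially a packing argument that exploits the fact that $\G$ is much finer (gap $\eps^2 q$) than the target window (width $\approx 0.5\eps q$). The only care needed is to pick the right scale $i$ and to check that the index $j^\star$ does not fall outside the allowed coefficient range $\{1,\dots,2/\eps^2 - 1\}$ near the boundaries of a dyadic block — handled above by the choice $q \in [2^{-i}/\eps, 2\cdot 2^{-i}/\eps)$, which forces $j^\star$ into the interior of the allowed range.
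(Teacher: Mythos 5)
Your proof is correct and shares the paper's core idea: fix the dyadic scale $i$ so that $q \in [2^{-i}/\eps,\,2^{-i+1}/\eps)$, where the lattice $\{2^{-i}\eps j\}$ is finer than the target window. The only stylistic difference is that you argue non-constructively (the window $I_q$ has width exceeding the lattice gap, so some lattice point must land inside) while the paper picks a specific $z = 2^{-i}\eps\lfloor L - 1.5\eps L\rfloor$ and verifies the inequalities directly; both cover the same ground, including the needed range check on the coefficient $j^\star$.
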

\begin{proof}
    Write $q$ in a form $2^{-i} \eta L$ where $i$ is an integer and $L \in [\nicefrac{1}{\eta^2},\nicefrac{2}{\eta^2})$ is not necessarily an integer. By definition of $\G$, such a form must exist. 

    We will choose the value $z = 2^{-i} \eta \cdot \lfloor L - 1.5 \eta L \rfloor$. This choice of $z$ is a valid element in $\G$ because the coefficient $\lfloor L - 1.5 \eta L \rfloor$ is an integer less than $L$, and it is a positive integer since
    \begin{equation*}
        \lfloor L - 1.5 \eta L \rfloor \ge L \cdot (1-1.5\eta) -1 \ge \frac{1}{2}L-1 \ge \frac{1}{2 \eta^2} - 1 \ge 49.
    \end{equation*}

    Moreover, we observe $q-z$ satisfies our desired properties:

    \begin{equation*}
        \frac{q-z}{z} = \frac{L - \lfloor L - 1.5 \eta L \rfloor}{\lfloor L - 1.5 \eta L \rfloor} \le \frac{L - (L - 1.5\eta L -1)}{L - 1.5\eta L - 1} = \frac{1.5 \eta L + 1}{L -1.5\eta L -1} \le \frac{1.6 \eta L}{L-1.6\eta L } \le \frac{1.6 \eta }{1-0.16} \le 2 \eta \implies q-z \le 2 \eta z
    \end{equation*}

    \begin{equation*}
        q-z = \frac{L - \lfloor L - 1.5\eta L \rfloor }{L} \cdot q \ge \frac{1.5 \eta L}{L} \cdot q = 1.5 \eta q \qedhere
    \end{equation*}
\end{proof}

In the dynamic program, we will choose this value of $z$; observe that $z$ is an eligible option because \cref{claim:z-exists} and $\eta \le \frac{1}{2}$ imply $z \ge q/2 \implies z \ge \eta M$. Let $\ind_1,\dots,\ind_m$ represent the states of $p_1,\dots,p_m$ that the coupling state (of size $q$) maps to. We will choose the values of $x_1,\dots,x_m$ corresponding to the sizes of $\ind_1,\dots,\ind_m$. Then, $\match(z,x_1,\dots,x_m)=a_1,\dots,a_m,b_1,\dots,b_m$ satisfies properties given by \cref{claim:ab-exist}. We must modify $\sol'$ so that it is still a valid coupling after every distribution state $x_i$ is split into $z,a_i,b_i$. Observe by \cref{claim:z-exists} and property (iv) of \cref{claim:ab-exist}, that all $a_i \le \eta z < 1.5\eta q \le q-z$. 

This gives us a clear modification plan. We will split the coupling state of size $q$ at the points $z$ and $z+a_i$ for all $i$. By the previous statement, all $z+a_i \in [z,q)$. Observe how this modified $\sol'$ is now a valid coupling after each distribution state was split into $z,a_i,b_i$: the coupling state of size $z$ will map to the new distribution state of size $z$, all coupling states ending by $z+a_i$ will map to the new distribution state of size $a_i$, and the remaining coupling states will be left with the distribution state of size $b_i$. In total, the $q-z$ remaining coupling mass was split into at most $m+1$ new coupling states. In terms of bookkeeping, we will now ignore the new coupling state of size $z$ since it is fully coupled, all coupling states in the new distribution state of size $a_i$ will be considered regular, in the distribution state $b_i$ we will consider its new coupling states as regular (these are the coupling states that came from the range $[z+a_i,q]$), but the existing coupling states in $b_i$ will retain their previous type. This is depicted in \cref{fig:match}. Informally, it may be helpful to think of this process as modifying a distribution state $\ind_i$ by ``deleting'' $z$ mass, forking off $a_i$ mass into a new distribution state, and $b_i$ mass is what remains of the initial state $\ind_i$. 

\begin{figure}[H]
    \centering
    \includegraphics[width=\textwidth]{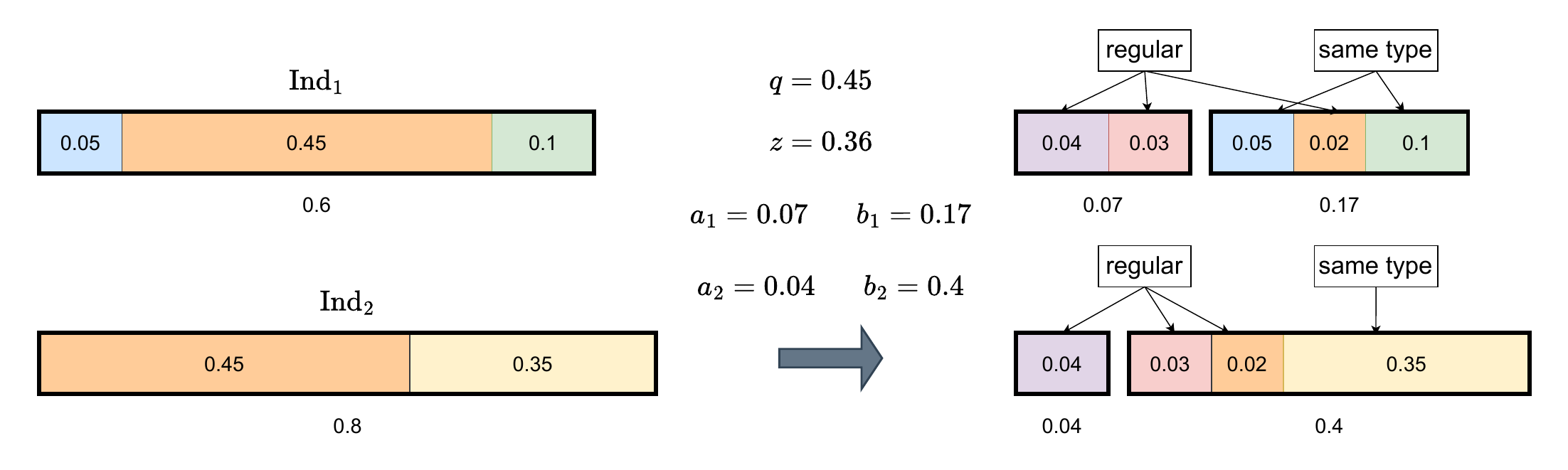}
    \caption{Depiction of how $\sol'$ is modified during a matching modification. This figure looks at how distribution states $\ind_1$ of $p_1$, and $\ind_2$ of $p_2$, are modified when a coupling state of size $q=0.45$ motivates the dynamic program to match with $z=0.36$. $\ind_1$ is split into states of size $z=0.36$, $a_1=0.07$, and $b_1=0.17$ (the $z$ state is ignored since it is now fully coupled). $\ind_2$ is split into states of size $z=0.36$, $a_2=0.04$, and $b_2=0.4$ (the $z$ state is similarly ignored). The coupling state of size $q$ is accordingly split at points $z,z+a_1,z+a_2$, and the three new coupling states are considered regular while the existing coupling states keep their original type. Following this, coupling states are reordered according to their types and sizes.} 
    \label{fig:match}
\end{figure}

\textbf{Entropy increase analysis.} We show how each matching modification does not increase the entropy of $\sol'$ by much:

\begin{lemma}[Matching modification bound]
    Any matching modification does not increase the entropy of $\sol'$ by more than
    \begin{equation*}
        z \cdot \eta \cdot (4 \log(1/\eta) + 2 \log(m+1)).
    \end{equation*}
\end{lemma}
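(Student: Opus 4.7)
The plan is to write the entropy change $\Delta$ incurred by this matching modification as a telescoping difference, and then combine the earlier splitting bound (Corollary~\ref{cor:split-y-bound}) with a standard concavity inequality for the residual pieces. Before the modification, $\sol'$ contains a single coupling state of mass $q$ inside $\ind_{i^*}$; after splitting that state at the $m+1$ points $z$ and $z+a_1,\dots,z+a_m$ (some of which may coincide), it is replaced by the fully-coupled state of mass $z$ together with at most $m+1$ residual coupling states $w_1,\dots,w_r$ satisfying $\sum_j w_j = q-z$. Since no other coupling state of $\sol'$ is touched, the change in entropy is exactly
\[
\Delta \;=\; H(z) \,+\, \sum_{j=1}^{r} H(w_j) \,-\, H(q).
\]

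For the residuals, the substitution $w_j = (q-z)\,q_j$ with $\sum_j q_j = 1$ reduces $\sum_j H(w_j)$ to $H(q-z) + (q-z)\cdot H(q_1,\dots,q_r)$; since $r \le m+1$, the entropy of the distribution $(q_1,\dots,q_r)$ is at most $\log(m+1)$, yielding
\[
\sum_{j=1}^{r} H(w_j) \;\le\; H(q-z) \,+\, (q-z)\log(m+1).
\]
The remaining combination $H(z) + H(q-z) - H(q)$ is exactly the entropy cost of splitting a mass $q$ into parts $z$ and $q-z$. By Claim~\ref{claim:z-exists} and the standing assumption $\eps \le 1/512$, we have $q-z \le 2\eps z \le 2\eps q < q/e$, so Corollary~\ref{cor:split-y-bound} applied with $y=q-z$ gives
\[
H(z) + H(q-z) - H(q) \;\le\; e(q-z) \,+\, (q-z)\log\!\left(\tfrac{q}{q-z}\right).
\]

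Combining both inequalities and substituting the bounds from Claim~\ref{claim:z-exists}, namely $q-z \le 2\eps z$ and $q/(q-z) \le 1/(1.5\eps) < 1/\eps$, yields
\[
\Delta \;\le\; 2\eps z \bigl( e \,+\, \log(1/\eps) \,+\, \log(m+1) \bigr).
\]
Since $\eps \le 1/512$ forces $\log(1/\eps) \ge 9 > e$, I can absorb the constant $e$ into $\log(1/\eps)$ (using $2e + 2\log(1/\eps) \le 4\log(1/\eps)$) to conclude $\Delta \le \eps z \bigl( 4\log(1/\eps) + 2\log(m+1) \bigr)$, matching the claimed bound.

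The main obstacle is the bookkeeping in the first paragraph: correctly counting the number of residual pieces as at most $m+1$, and verifying that every coupling state of $\sol'$ outside of $\ind_{i^*}$ is genuinely unaffected by the modification (even though distribution states other than $\ind_{i^*}$ are also being split into $z,a_i,b_i$). Once that accounting is pinned down, the rest is a mechanical combination of the splitting bound from Corollary~\ref{cor:split-y-bound} with the numerical guarantees baked into the construction of $z$ via Claim~\ref{claim:z-exists}.
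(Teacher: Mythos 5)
Your proposal is correct and follows essentially the same route as the paper: split off the $H(z)+H(q-z)-H(q)$ term via Corollary~\ref{cor:split-y-bound} with the bounds $q-z\le 2\eps z$ and $q-z\ge 1.5\eps q$ from Claim~\ref{claim:z-exists}, then charge the further subdivision of the $q-z$ residual into at most $m+1$ pieces at $(q-z)\log(m+1)\le 2\eps z\log(m+1)$ by concavity, and absorb the constant $e$ using $\log(1/\eps)\ge 9$. The bookkeeping you flag (at most $m+1$ residual pieces, other coupling states merely reassigned to $b_i$ rather than cut) is exactly what the paper's preceding ``modification procedure'' paragraph establishes, so there is no gap.
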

\begin{proof}
    Recall how the only modification to $\sol'$ was how a coupling state of size $q$ was split into a coupling state of size $z$, in addition to at most $m+1$ more coupling states with total mass $q-z$. First, consider the entropy increase from splitting the coupling state into $z$ and $q-z$. By \cref{cor:split-y-bound} and $1.5 \eta q \le q-z \le 2 \eta z$, this increases entropy by at most 
    \begin{equation*}
        e(q-z) + (q-z) \log(q/(q-z)) \le 2e \eta z + 2\eta z \log(1/\eta) \le 4 \eta z \log(1/\eta).
    \end{equation*}
    Later, splitting the $q-z$ state into at most $m+1$ states cannot increase the entropy by more than $(q-z) \log(m+1) \le 2 \eta z \log(m+1)$. In total, the increase in entropy is at most $z \cdot \eta \cdot (4 \log(1/\eta) + 2 \log(m+1))$.
\end{proof}

Since the sum of $z$ matched throughout the entire process is at most $1$, these matching operations increase entropy by at most $\eta \cdot (4 \log(1/\eta) + 2 \log(m+1)) \le 6 \eta \log(1/\eta)$, since $1/\eta \ge 2^{3m} \cdot 4m > m+1$.

\subsection{Splitting} 
\textbf{Modification procedure.} When a distribution state $\ind_{i^*}$ is split in half, $\sol'$ is simply split as is done in \cref{fig:split-coupling}. However, we do make modifications to the types of various coupling states, which affects the ordering after the split. When the distribution state $\ind_{i^*}$ is split in half, let us denote $s_L$ as the left half, and $s_R$ as the right half. None of the types in $s_R$ are modified; if the leftmost coupling state of $s_R$ is a new coupling state created by the split, then we consider it left-pierced. For $s_L$, we will consider its rightmost coupling state as right-pierced, consider its next $\nicefrac{1}{8\eta}$ rightmost coupling states as important, and the remaining coupling state types are unchanged (later it will be clear that they were all either left-pierced or regular). 
 
So far, we have only detailed how the splitting operation changes the coupling state types with respect to $p_{i^*}$, but the types for other distributions are also affected. Let $\mathcal{C}$ refer to the coupling state that was split into two parts by this operation (if there is no such $\mathcal{C}$, because the split occurred between two coupling states, then no further type changes are required). For all $j \ne i^*$: if $p_j$ considered $\mathcal{C}$ (left/right)-pierced or regular, then both parts retain the original type; else, $p_j$ previously considered $\mathcal{C}$ important, and it will now consider the smaller split component of $\mathcal{C}$ to be regular, while the larger component retains the important type (if equally sized, tiebreak arbitrarily). The type changes are visualized in \cref{fig:split-types}.

\begin{figure}[H]
    \centering
    \includegraphics[width=\textwidth]{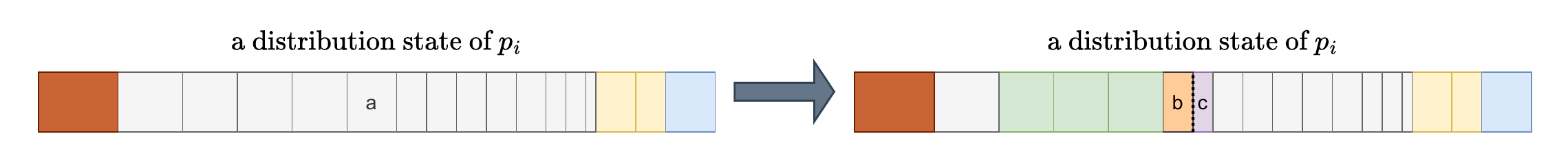}
    \caption{Depiction of how splitting some distribution state of $p_i$ in half affects its types of coupling states. Initially, $p_i$ considered one coupling state in this distribution state as left-pierced (brown), two as important (yellow), and one as right-pierced (blue). Coupling state $a$ is regular (gray) and contains the midpoint. After the split operation, the existing coupling states to the right of the midpoint retain their original type. Coupling state $c$ becomes left-pierced, and coupling state $b$ becomes right-pierced. The next $\nicefrac{1}{8 \eta}$ coupling states to the left become important (green). The remaining coupling states to the left will retain their original type. For any other distribution $p_j$, this may only affect the types of $b$ and $c$. If $p_j$ considered $a$ (left/right)-pierced or regular, then $b$ and $c$ will retain this type; else, $p_j$ previously considered $a$ important, and it will now consider $b$ important (because it is larger), and $c$ regular.} 
    \label{fig:split-types}
\end{figure}

Afterwards, reorder the coupling states for all distributions according to their type and size.

\textbf{Entropy increase analysis.} As we analyze, it will be helpful to have specific terminology to discuss how coupling states are modified over time. We will define a special tree that characterizes these modifications; please refer to \cref{fig:tree} during the following paragraph while we establish the relevant definitions for this tree.

\begin{figure}[H]
    \centering
    \includegraphics[width=0.6\textwidth]{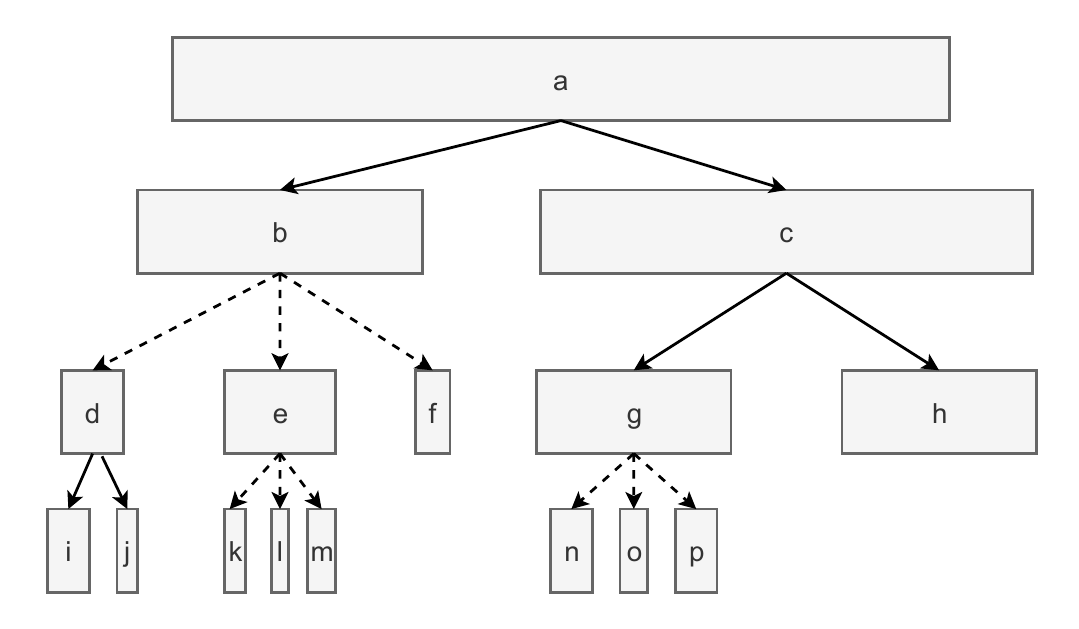}
    \caption{Depiction of how a coupling state is modified over time. Please see the following paragraph for the relevant definitions.} 
    \label{fig:tree}
\end{figure}

Nodes in the tree represent a coupling state, and outgoing edges represent when a coupling state is modified, resulting in its children. The root is a coupling state in $\sol'$ at the start of the dynamic program. Solid outgoing arrows represent a modification caused by a splitting operation (the sum of the two child sizes will be the same as the parent size), while dashed arrows represent a modification caused by a matching operation (the at most $m+1$ children will each have size at most $2\eta$ fraction of its parent via $q-z \le 2 \eta z \le 2 \eta q$). The \textit{near-descendants} of a node is the set of nodes that can be reached by a downward path of exclusively solid edges; in \cref{fig:tree}, the near-descendants of $a$ are $\{b,c,g,h\}$. The \textit{near-height} of a node is the number of edges in the longest downward path from the node, consisting of only solid edges; in \cref{fig:tree}, the near-height of $a$ is $2$. The \textit{level} of a node is the number of dashed edges in the path from the root to this node; in \cref{fig:tree}, the level of $a$ is 0, and the level of $k$ is 2. 

The way we have chosen to modify $\sol'$ enforces some convenient invariants:

\begin{claim}\label{claim:midpoint}
    Whenever a distribution $p_i$ splits one of its distribution states in half, there was always strictly more than $2^m + \nicefrac{1}{8 \eta}$ coupling states on either side of the midpoint. 
\end{claim}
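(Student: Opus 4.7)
\textbf{Proof proposal for \cref{claim:midpoint}.} The plan is to combine two ingredients: the trigger condition for a splitting operation gives a uniform size bound on every coupling state inside the distribution state being split, and then a direct mass-counting argument converts this size bound into a count of coupling states on each half of that state. The generous smallness of $\eps$ relative to $m$ dictated by the standing assumption $\eps \le 1/(2^{3m}\cdot 4m)$ will then make the desired numerical bound almost immediate.

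First I would recall from the ``Setup for dynamic programming related modifications'' subsection that a splitting modification of $\ind_{i^*}$ is performed only in case (ii), namely when $\ind_{i^*}$ contains no coupling state of $\sol'$ of size at least $2\eps M$. Hence, at the moment of the split, every coupling state contained in $\ind_{i^*}$ has size strictly less than $2\eps M$. Since the claim concerns the distribution $p_{i^*}$ whose distribution state is being halved, this is exactly the situation we need to analyze.

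Next I would carry out the mass-counting. The midpoint partitions $\ind_{i^*}$ into a left half and a right half, each of total mass exactly $M/2$. Consider either half; its mass is covered by the coupling states that have any overlap with it, and at most one coupling state can straddle the midpoint. Since each coupling state has size strictly less than $2\eps M$, the number of coupling states with nonempty overlap with that half is strictly greater than
\begin{equation*}
    \frac{M/2}{2\eps M} \;=\; \frac{1}{4\eps}.
\end{equation*}
Even after excluding the (at most one) straddling state so as to count only states lying entirely on one side, we still have strictly more than $1/(4\eps) - 1$ such coupling states.

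Finally I would close the gap to $2^m + 1/(8\eps)$ using the $\eps$ assumption. From $\eps \le 1/(2^{3m}\cdot 4m)$ we obtain $1/(8\eps) \ge 2^{3m-1} m$, which for every $m \ge 1$ strictly exceeds $2^m + 1$ (the tightest case is $m=1$, where $4 > 3$). Therefore
\begin{equation*}
    \frac{1}{4\eps} - 1 \;=\; \frac{1}{8\eps} + \left(\frac{1}{8\eps} - 1\right) \;>\; \frac{1}{8\eps} + 2^m,
\end{equation*}
giving strictly more than $2^m + 1/(8\eps)$ coupling states on each side of the midpoint. I do not expect a serious obstacle here: the claim is essentially a direct consequence of the splitting trigger and the very loose smallness assumption on $\eps$. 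The only delicate bookkeeping point is whether a coupling state straddling the midpoint should be counted toward one side or both, but the slack in the mass-counting bound easily absorbs this $\pm 1$ ambiguity.
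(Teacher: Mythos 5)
Your proposal is correct and is essentially the paper's argument in contrapositive-free form: both proofs pigeonhole the mass $M/2$ on each side of the midpoint against the splitting trigger (every coupling state in $\ind_{i^*}$ has size $< 2\eps M$) and then use $\eps \le 1/(2^{3m}\cdot 4m)$ to compare $2^m$ with $1/\eps$. The paper phrases it as "if there were few states, one would have mass $\ge 2\eps M$, so the split would not have occurred," while you count directly and handle the straddling state explicitly; the slack absorbs that $\pm 1$ either way.
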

\begin{proof}
    If this were not true, then at least one coupling state must consist of a large fraction of the distribution state's mass, lower bounded by 
    \begin{equation*}
        \frac{1/2}{\nicefrac{1}{8 \eta} + 2^m} \ge \frac{1/2}{\nicefrac{1}{8 \eta} + \nicefrac{1}{10 \eta}} \ge 2 \eta, \rtag{using $m\ge 2$ and  $\eta \le \frac{1}{2^{3m} \cdot 4m} \implies \nicefrac{1}{10\eta} \ge 2^m$}
    \end{equation*}
    in which case this split would not have occurred. 
\end{proof}

\begin{claim}\label{claim:important-bound}
    Any distribution state contains at most $\nicefrac{1}{8 \eta}$ important coupling states, $2^{m-1}$ left-pierced coupling states, and $2^{m-1}$ right-pierced coupling states, at any time. 
    
    Further, this implies: (i) the midpoint of a split operation from some distribution will only ever split a coupling state it considers regular, (ii) any time a new distribution state $s_L$ is created by a split operation, it will start with exactly $\nicefrac{1}{8 \eta}$ important coupling states, (iii) if some distribution considers a coupling state pierced, it has near-height at most $m-1$, and (iv) any coupling state has near-height at most $m$. 
\end{claim}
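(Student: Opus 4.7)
The plan is to prove all parts simultaneously by strong induction on the sequence of operations in \cref{alg:fast}. The induction hypothesis is that before each operation, the three count bounds hold for every distribution state then in existence, and (i)--(iv) hold for every coupling state. The base case is immediate: initially all coupling states are regular, so the counts are zero and (i)--(iv) are vacuous. Matching operations are easy: a matching splits exactly one coupling state into at most $m+1$ new pieces, all designated regular for every distribution, while every other coupling state retains its type, so no important or pierced counts increase.

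The substantive case is a splitting operation on $\ind_{i^*}$. First, for (i), the IH counts combined with the non-increasing ordering of coupling states within $\ind_{i^*}$ imply that the leftmost end contains at most $2^{m-1}$ left-pierced states and the rightmost end at most $2^{m-1} + 1/(8\eps)$ important-or-right-pierced states, while \cref{claim:midpoint} forces both sides of the midpoint to contain strictly more than $2^m + 1/(8\eps)$ coupling states, so the midpoint must fall inside the regular region. Given (i), the new counts for $s_L$ and $s_R$ from $p_{i^*}$'s perspective follow directly from the modification procedure, immediately yielding (ii) and preserving the bounds for $p_{i^*}$. The important count is especially easy to maintain: once a distribution state is born with exactly $1/(8\eps)$ important states for its birthing distribution, subsequent splits by other distributions can only decrease that count (the smaller child of an important state loses the important label), and the distribution state never acquires new important states after birth.

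The most delicate step is bounding the inherited pierced counts in other distributions $p_j$ with $j \neq i^*$. When the midpoint's split cuts through a coupling state $\mathcal{C}$ that $p_j$ considers left-pierced, both halves inherit the type, so the left-pierced count in $p_j$'s enclosing distribution state $s$ grows by one. To show this still yields at most $2^{m-1}$, I would argue that every left-pierced coupling state in $s$ descends through a binary inheritance tree (of solid edges) from an ancestor coupling state first declared left-pierced at an $s_R$-birth in $s$'s ancestry. By (iii) applied to that ancestor, its near-height is at most $m-1$, so the inheritance tree has depth at most $m-1$ and its current frontier---precisely the current left-pierced count in $s$---has size at most $2^{m-1}$. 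The symmetric argument handles right-pierced.

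Finally, (iii) and (iv) follow from a chain argument on the solid-edge tree: along any descending chain of splits from a coupling state $\mathcal{C}$, each splitter $p_{i^*}$ must be a distribution for which the current coupling state is regular (by (i)), and after the split both children inherit a fresh piercing for $p_{i^*}$. Piercings persist under all subsequent splits, so the piercing count of the coupling state on the chain strictly increases by one at each step. Capped at $m$, this forces any chain to have length at most $m$ (yielding (iv)), and at most $m-1$ if $\mathcal{C}$ already carries a piercing (yielding (iii) for pierced); the important case of (iii) follows from the same framework combined with the fact that the first split of an important-for-$p_j$ state must be by a distribution $\ne p_j$, immediately pushing the piercing count above zero. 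The main technical obstacle I anticipate is the circularity among the pieces of the claim: preservation of the $2^{m-1}$ bounds in the splitting case relies on (iii), which relies on (i), which relies on the prior-time count bounds. Structuring the strong induction so each step appeals only to strictly earlier-time instances---establishing (i) from pre-operation counts, then invoking (iii) to justify the post-operation counts---is the essential organizational step.
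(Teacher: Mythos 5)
Your proposal follows essentially the same route as the paper: induction over the operation sequence, using \cref{claim:midpoint} plus the ordering to place the midpoint in the regular region (giving (i) and (ii)), a binary inheritance-tree argument of depth at most $m-1$ for the pierced counts, and a ``each distribution pierces a chain at most once'' argument for (iii)/(iv). The one step where your write-up does not deliver what is claimed is the \emph{important} case of (iii): your piercing-count argument starts an important-for-$p_j$ state at count $0$, the first split (necessarily by some $p_k \ne p_j$) raises it to $1$, and the cap of $m$ then permits $m-1$ further solid edges --- a chain of length $m$, i.e.\ near-height $\le m$, not the claimed $m-1$. To get $m-1$ you must rule out $p_j$ ever appearing as a splitter anywhere below an important-for-$p_j$ state, and this is not automatic from your framework, because the \emph{smaller} half of an important state reverts to regular for $p_j$ and is re-sorted into the regular section, where a later midpoint of $p_j$ could in principle land. (The paper's own proof asserts outright that a distribution never splits near-descendants of states it considers important, which quietly skates over the same point.) The good news is that this discrepancy is harmless downstream: \cref{lemma:cost-relate} only tracks the chain of \emph{larger} halves, which stay important for $p_j$ and hence are never midpoint-split by $p_j$ by (i), so that chain genuinely has length at most $m-1$; and \cref{claim:tree-states} only needs (iv). If you want to state (iii) for important states as written, you should either restrict it to the larger-half chain or supply an argument that $p_j$'s midpoints avoid the regular remnants of its important states.
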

\begin{proof}
    Suppose this main invariant has held so far. This is initially true because all coupling states start as regular. 

    Then, consider any split operation by a distribution $p_i$. We know from \cref{claim:midpoint} that there are strictly more than $2^m + \nicefrac{1}{8\eta}$ coupling states on either side of the midpoint. Since there are only at most $2^{m-1}$ left-pierced coupling states, they will be strictly to the left of the midpoint; also, since there are only at most $\nicefrac{1}{8 \eta}$ important coupling states and $2^{m-1}$ right-pierced coupling states, they will be strictly to the right of the midpoint. Hence, implication (i) holds: the midpoint must split a regular coupling state. Similarly, the new distribution state $s_L$ has strictly more than $2^m + \nicefrac{1}{8 \eta}$ coupling states (at most $2^{m-1}$ are left-pierced, the rest are regular), so implication (ii) holds, as there are enough coupling states to designate $\nicefrac{1}{8 \eta}$ as important.

    We should also examine the effect of this split operation from $p_i$ on some other distribution $p_j$. This never changes the number of important coupling states in $p_j$. While this may increase the number of pierced coupling states, we can bound the number of times this happens. Observe how once a distribution splits a coupling state, it considers all its near-descendants as pierced. By (i), this means the distribution will never split any of the near-descendants. With this in mind, consider in the tree any downward path consisting of only solid edges that starts at a pierced coupling state; this path must have at most $m-1$ edges, or it would directly contradict the previous observation (by pigeonhole, some distribution would be splitting a near-descendant). This proves (iii), which in turn implies (iv) since any downward path of solid edges will have some distribution that considers the second coupling state on the path as pierced. 
    
    All together, by (iii) we will conclude that the initial left-pierced coupling state will be split into at most $2^{m-1}$ left-pierced coupling states. This follows from how the current number of left-pierced coupling states is the number of leaf node coupling states reachable from the initial left-pierced coupling state via exclusively solid edges; this is at most 2 to the power of the coupling state's near-height, which is at most $m-1$ via (iii). The same argument holds for right-pierced coupling states.

    Observe how  matching operations never increase the number of pierced or important coupling states. This is enough to conclude our invariant holds.
\end{proof}

With these structural properties in hand, our plan will be to charge the immediate cost of splitting against the cost of later operations. For accounting purposes, we define an \textit{identifier} for each distribution state. For each identifier $r$, we also establish quantities $\operatorname{Cost}_r$, $\operatorname{Matched}_r$, and $\operatorname{Contain}_r$. At the beginning of the dynamic programming process, each distribution state is given a unique identifier and all $\operatorname{Cost}_r = \operatorname{Matched}_r = 0$.

Let us specify how identifiers are managed throughout the operations. In a matching operation, a distribution state of size $x_i$ is split into distribution states of size $z, a_i, b_i$. The distribution state of size $z$ will not have any identifier (because we ignore it from then onward), the size $a_i$ distribution state will be given a new identifier, and the size $b_i$ distribution state will inherit the original $x_i$ distribution state's identifier.

In a splitting operation, the right half $s_R$ will inherit the  identifier from $\ind_{i^*}$, and the left half $s_L$ will receive a new identifier $r'$. Let $x$ be the size of the coupling state that is split by this operation ($x=0$ if the split occurs between coupling states). Then, we set $\operatorname{Cost}_{r'} = x$.

\begin{claim}\label{claim:cost-bound}
    The total entropy increase from all splitting operations is at most $\sum_r \operatorname{Cost}_r$.
\end{claim}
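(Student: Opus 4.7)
The plan is to charge the entropy increase of each individual splitting operation directly to the quantity $\operatorname{Cost}_{r'}$ assigned to the fresh identifier $r'$ that the operation creates. First I would note that when a splitting operation halves $\ind_{i^*}$, the only way $\sol'$ changes as an actual coupling is that a single coupling state—the one straddling the midpoint, if any—gets cut into two pieces. The reordering of coupling states within each distribution state and the relabeling of coupling-state types between regular/important/left-pierced/right-pierced are purely bookkeeping devices for the charging argument; they do not alter the underlying joint distribution, so they contribute nothing to the entropy of $\sol'$.

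Next I would bound the entropy increase from that single cut. If a coupling state of size $x$ is split into pieces $y$ and $x-y$, a direct calculation gives
\begin{equation*}
H(y)+H(x-y)-H(x) = y\log(x/y) + (x-y)\log(x/(x-y)) = x \cdot \big(-(y/x)\log(y/x)-(1-y/x)\log(1-y/x)\big) \le x,
\end{equation*}
since the binary entropy (in bits) is at most $1$. In the case where the midpoint happens to fall exactly between two coupling states, no coupling state is cut and the entropy increase is $0$; this matches the convention that $x=0$ in such an operation.

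Finally I would sum the per-split bound across every splitting operation performed in the dynamic program. By construction, each splitting operation creates exactly one fresh identifier $r'$ with $\operatorname{Cost}_{r'}$ set to the size $x$ of the coupling state it cuts (or $0$), while matching operations and the initial setup never increment any $\operatorname{Cost}_r$. Hence the total entropy increase is bounded by $\sum_{\text{splits}} x = \sum_r \operatorname{Cost}_r$, as claimed. The main ``obstacle'' here is really just the care needed to separate the cosmetic type/ordering changes from the genuine coupling modification—once that is isolated, the claim reduces to a one-step accounting identity. The substantive work in this section of the analysis will not lie in this claim itself but in the subsequent lemmas that control $\sum_r \operatorname{Cost}_r$ using the structural invariants from \cref{claim:midpoint,claim:important-bound} (bounded near-heights, at most $\nicefrac{1}{8\eps}$ important coupling states per distribution state, etc.).
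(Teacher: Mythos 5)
Your proposal is correct and takes essentially the same approach as the paper: the paper's proof is the single sentence that the entropy increase from each splitting operation is at most $\operatorname{Cost}_{r'}$, which is exactly the per-split binary-entropy bound $H(y)+H(x-y)-H(x) \le x$ you spell out, summed over all splits. You have simply made explicit the calculation and the observation that only splitting operations assign nonzero $\operatorname{Cost}$.
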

\begin{proof}
    This follows from how the entropy increase from any individual splitting operation is at most $\operatorname{Cost}_{r'}$.
\end{proof}

We can now define the quantities $\operatorname{Matched}_r$ and $\operatorname{Contain}_r$:

\begin{definition}
    For any $r$, the quantity $\operatorname{Matched}_r$ represents the total mass matched inside $r$. More formally, for any matching operation with $z$ and distribution states $\ind_{1},\dots,\ind_{m}$, each $\operatorname{Matched}_{\ind_{i}}$ is increased by $z$.
\end{definition}

\begin{definition}
    $\operatorname{Contain}_r$ is equal to the size of the distribution state $r$ when the leftovers base case is eventually reached.
\end{definition}

More concretely, our plan is to bound the sum of all $\operatorname{Matched}_r,\operatorname{Contain}_r$, and then bound each $\operatorname{Cost}_r$ in relation to these quantities. The first part follows simply:

\begin{claim}\label{claim:match-contain-bound}
    $\sum_r \left(\operatorname{Matched}_r +  \operatorname{Contain}_r \right) = m $
\end{claim}
\begin{proof}
    For each distribution, all the mass is either matched or corresponds to the leftovers base case. Hence, the sum over $r$ corresponding to each $p_i$ is $1$, and the total sum is equal to $m$.
\end{proof}

We are now ready to bound each $\operatorname{Cost}_r$ with respect to $\operatorname{Matched}_r,\operatorname{Contain}_r$:

\begin{lemma}\label{lemma:cost-relate}
    For each identifier $r$, it holds that
    \begin{equation*}
        \operatorname{Cost}_r \le \eta  2^{m+4} \cdot (\operatorname{Matched}_r + \operatorname{Contain}_r)
    \end{equation*}
\end{lemma}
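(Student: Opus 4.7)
The plan is to charge $\operatorname{Cost}_r$ against the mass of the $1/(8\eps)$ important coupling states designated at the moment identifier $r$ was born. Say $r$ was created when $p_{i^*}$ split $\ind_{i^*}$ at its midpoint, splitting a coupling state of size $x = \operatorname{Cost}_r$. By implication (i) of \cref{claim:important-bound} this state was regular, and since regular states sit in non-increasing order, every regular state to its left in $s_L$ had size at least $x$. By \cref{claim:midpoint}, $s_L$ carries strictly more than $2^m + 1/(8\eps)$ coupling states, so the $1/(8\eps)$ states newly designated important in $s_L$ (the next-rightmost after the freshly created right-pierced) are exactly these rightmost regular states, each of size at least $x$. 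This gives a ``budget'' of important mass totalling at least $x/(8\eps)$ to charge against.

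Next I would prove a confinement invariant: this budget stays inside the distribution state still carrying identifier $r$ until it is either matched off or absorbed by the leftover base case. The three cases to check: (a) when $p_{i^*}$ later splits $r$ again, implication (i) forces the midpoint through a regular state, and the important lineage sits in $s_R$ (which inherits $r$) because important comes after regular in the ordering; (b) when $r$ is matched with some coupling state $q$, the splits at $z$ and $z+a_{i^*}$ both sit strictly inside $q$ (using $a_{i^*} \le \eps z$ from \cref{claim:ab-exist} and $z + a_{i^*} < q$ from \cref{claim:z-exists}), so every tracked coupling state other than $q$ itself lies entirely to the right of $z+a_{i^*}$ and falls into $b_{i^*}$, which inherits $r$; if instead $q$ is itself in the tracked lineage, then $z \ge q/(1+2\eps) \ge (1-2\eps)q$ is matched (contributing to $\operatorname{Matched}_r$), only the small $a_{i^*} \le \eps z$ piece leaks to a new identifier, and the $b_{i^*}$ piece stays in $r$; (c) splits performed by any other $p_j$ subdivide coupling states in place without moving them between $p_{i^*}$'s distribution states, merely possibly demoting one half of a tracked important to regular in $p_{i^*}$'s view.

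With confinement in hand, implication (iii) tells me each tracked important state has near-height at most $m-1$, so its near-descendant tree in $\sol'$ has at most $2^{m-1}$ leaves, whose sizes still sum to at least $x$. Each such near-leaf is either matched (contributing $z \ge (1-2\eps)q$ to $\operatorname{Matched}_r$ by the confinement step) or survives to the base case (contributing its full size to $\operatorname{Contain}_r$); in both cases it donates at least a $(1-2\eps)$ fraction of its size. Summing inside one important state yields at least $(1-2\eps)x$, and summing across all $1/(8\eps)$ important states of $r$ gives
\[
\operatorname{Matched}_r + \operatorname{Contain}_r \;\ge\; (1-2\eps)\cdot \frac{x}{8\eps} \;\ge\; \frac{x}{16\eps} \;\ge\; \frac{\operatorname{Cost}_r}{\eps \cdot 2^{m+3}}
\]
for every $m \ge 1$, which rearranges to the claimed bound.

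The technical core, and the step I expect to be the main obstacle, is the confinement invariant in the second paragraph: tracking where the $1/(8\eps)$ tracked coupling states and all their split descendants physically sit inside the $p_{i^*}$-distribution state labeled $r$ as the DP trace unfolds. This hinges on the ordering convention (left-pierced, regular non-increasing, important, right-pierced) being stable under every type-update rule of \cref{claim:important-bound}, and on the near-height bound (iii) genuinely cutting off the solid-edge tree before branching becomes unbounded. The subtlest subcase is the one where the matched $q$ already belongs to the tracked near-descendant tree, where one has to simultaneously register $z$ as contribution to $\operatorname{Matched}_r$ and account for the $a_{i^*}$ leakage to a foreign identifier; once this subcase is pinned down, the remaining arithmetic is routine.
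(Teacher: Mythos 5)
Your overall strategy is the same as the paper's: charge $\operatorname{Cost}_r$ against the $\nicefrac{1}{8\eps}$ important coupling states designated at the birth of $r$ (each of size at least $\operatorname{Cost}_r$ because the midpoint pierces a regular state and regular states are sorted non-increasingly), confine them to $r$, and cash them out as $\operatorname{Matched}_r$ or $\operatorname{Contain}_r$. However, your final accounting step has a genuine gap. You claim that for each tracked important state, \emph{all} of the at most $2^{m-1}$ leaves of its near-descendant tree remain attributed to $r$, so that their sizes (summing to the original size $x$) all land in $\operatorname{Matched}_r + \operatorname{Contain}_r$, yielding the strong bound $(1-2\eps)\,x/(8\eps)$. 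But when another distribution $p_j$ splits a tracked important state, only the \emph{larger} half retains the important type in $p_{i^*}$'s view; the smaller half is demoted to regular. After reordering, that demoted piece sits in the regular section of $r$, and a subsequent split of $r$ by $p_{i^*}$ can place it to the left of the midpoint, shipping it into $s_L$, which receives a \emph{new} identifier. Its eventual matched or leftover mass is then credited to that new identifier, not to $r$. Your confinement case (a) only establishes that the \emph{important lineage} stays in $s_R$; it does not cover the demoted-regular siblings, so the mass-conservation claim "whose sizes still sum to at least $x$" is not justified.

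The paper's proof repairs exactly this by following only the important lineage (the larger half at each solid split), which provably never leaves $r$, at the cost of the size degrading by a factor of at most $2^{m-1}$ (via the near-height bound (iii) of \cref{claim:important-bound}); a matching then contributes at least $\operatorname{Cost}_r/2^{m}$ to $\operatorname{Matched}_r$ and survival contributes at least $\operatorname{Cost}_r/2^{m-1}$ to $\operatorname{Contain}_r$, giving $\operatorname{Matched}_r + \operatorname{Contain}_r \ge \operatorname{Cost}_r/(8\eps\, 2^{m})$. This is precisely why the lemma carries the factor $\eps\, 2^{m+3}$ rather than the $16\eps$ your (unjustified) stronger accounting would give. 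If you replace your third paragraph with this single-lineage argument, the rest of your write-up (the identification of the $\nicefrac{1}{8\eps}$ important states of size at least $\operatorname{Cost}_r$, and the matching subcases) goes through.
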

\begin{proof}
    Let $r$ be the identifier of a distribution state for some distribution $p_i$. If $r$ is an identifier for a distribution state where $\operatorname{Cost}_r=0$, then this lemma trivially holds. Otherwise, $r$ was created as the left half of a splitting operation. 
    
    We will show this lemma by analyzing what happens to important coupling states after the creation of $r$. By \cref{claim:important-bound}, we know the midpoint splits a regular coupling state, and when any $r$ is created it will have exactly $\nicefrac{1}{8 \eta}$ important coupling states. Since regular states are sorted in non-increasing order, the $\nicefrac{1}{8 \eta}$ important coupling states will all have size at least $\operatorname{Cost}_r$ when $r$ was created.

    Let us examine how these important coupling states change after $r$ is created. Because we know a split midpoint is always to the left of all important coupling states, we know that a split operation for $p_i$ will never move an important coupling state to a different distribution state of $p_i$. So, the only possible final outcomes for an important coupling state are either (i) it is involved in a matching, or (ii) it remains in $r$ until the leftover base case. 

    Before this final outcome, the only modification that may occur to an important coupling state would be caused by this coupling state being split by a midpoint of a splitting operation for some other $p_j$. When such an operation occurs, recall that we consider the larger half to remain the important coupling state, and that $p_j$ now considers this coupling state as pierced. Since \cref{claim:important-bound} argues the near-height of an important coupling state at the creation of $r$ is at most $m$, the important coupling state is at worst halved in size $m$ times, meaning it will have size at least $\operatorname{Cost}_r / 2^{m}$ at its final outcome. 

    For case (i), since we know for a matching $q-z \le 2 \eta z  \implies z \ge q/2$, implying the important coupling state will contribute at least $\operatorname{Cost}_r / 2^{m+1}$ to $\operatorname{Matched}_r$. Otherwise, for case (ii), we know the important coupling state will contribute at least $\operatorname{Cost}_r / 2^{m}$ to $\operatorname{Contain}_r$. Since there are $\nicefrac{1}{8 \eta}$ important coupling states, we reach our desired guarantee of
    \begin{equation*}
        \operatorname{Matched}_r + \operatorname{Contain}_r \ge \nicefrac{1}{8 \eta} \cdot \operatorname{Cost}_r / 2^{m+1} \implies \operatorname{Cost}_r \le \eta  2^{m+4} \cdot (\operatorname{Matched}_r + \operatorname{Contain}_r). \qedhere
    \end{equation*}
\end{proof}

\begin{corollary}
    The increase in entropy from splitting operations is at most $\eta m 2^{m+4}$.
\end{corollary}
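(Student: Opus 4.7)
The plan is to simply chain together the three results immediately preceding this corollary. By Claim \ref{claim:cost-bound}, the total entropy increase from all splitting operations is bounded by $\sum_r \operatorname{Cost}_r$, where the sum ranges over all identifiers ever created. So the task reduces to bounding this sum.

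To bound $\sum_r \operatorname{Cost}_r$, I would apply Lemma \ref{lemma:cost-relate} termwise to get
\begin{equation*}
\sum_r \operatorname{Cost}_r \;\le\; \eps\, 2^{m+3} \sum_r \bigl( \operatorname{Matched}_r + \operatorname{Contain}_r \bigr),
\end{equation*}
and then split the right-hand side into two sums and invoke Claims \ref{claim:match-bound} and \ref{claim:contain-bound}, which give $\sum_r \operatorname{Matched}_r \le m$ and $\sum_r \operatorname{Contain}_r \le m$. Adding these yields $\sum_r \operatorname{Cost}_r \le \eps\, 2^{m+3} \cdot 2m = \eps\, m\, 2^{m+4}$.

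There is essentially no obstacle here — all the real work was done in Lemma \ref{lemma:cost-relate} (the per-identifier charging argument tracking important coupling states through splits and matchings) and in the two global mass-conservation claims. The corollary is a one-line aggregation, and no additional case analysis or new ideas are required. The only mild subtlety is making sure the identifier set over which we sum is the same in all four statements: identifiers are created at the start of the DP and whenever a matching or splitting operation forks off a new distribution state, and each such identifier appears once in $\operatorname{Cost}_r$, $\operatorname{Matched}_r$, and $\operatorname{Contain}_r$ (with value zero if it was never involved in a split, never matched into, or absorbed before the base case, respectively), so the termwise application of Lemma \ref{lemma:cost-relate} is valid without further bookkeeping.
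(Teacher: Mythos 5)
Your proposal is correct and matches the paper's proof exactly: it chains \cref{claim:cost-bound}, applies \cref{lemma:cost-relate} termwise, and sums via \cref{claim:match-bound,claim:contain-bound} to get $\eps\, 2^{m+3}\cdot 2m = \eps m 2^{m+4}$. No differences worth noting.
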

\begin{proof}
    Using \cref{claim:match-contain-bound,lemma:cost-relate}, the total increase is bounded by
    \begin{equation*}
        \sum_r \operatorname{Cost}_r \le \sum_r \eta  2^{m+4} \cdot (\operatorname{Matched}_r + \operatorname{Contain}_r) = \eta m 2^{m+4}. \qedhere
    \end{equation*}
\end{proof}
\subsection{Leftover}

Lastly, we consider the leftover distribution states in the base case. The relevant quantity to bound is $\sum_{i=1}^m H(S_i^{\text{leftover}})$; crucially, each summand is upper bounded by the sum of the entropy values of the remaining coupling states in $\sol'$. We will analyze this last quantity by studying the process described in \cref{fig:tree}. Since the remaining leftover coupling states will all have size less than $\tau / \alpha$, it is sufficient to bound the entropy of the leaf coupling states with size less than $\tau / \alpha$ in the tree process of \cref{fig:tree}. Roughly, we should expect this quantity to be small, because the number of coupling states is only large when considering sufficiently large levels of the tree, but large levels of the tree have rapidly decaying mass (because increasing a level means reducing to $O(\eta)$ fraction of the parent's mass). We now execute this plan.

We aim to bound the entropy of leaf nodes in \cref{fig:tree} with size less than $\tau / \alpha$. Let us focus on a particular coupling state $\mathcal{C}$ in $\sol'$ at the beginning of the dynamic program. Let $\operatorname{Level}(\mathcal{C},i)$ denote the set of nodes with level $i$ in the tree with root $\mathcal{C}$, let $\operatorname{Small}(\mathcal{C})$ denote the \textit{leaves} in the tree with size less than $\tau / \alpha$, and let $\operatorname{size}(c)$ denote the size of some coupling state $c$.  We have argued how for each $j \in \{1,\dots,m\}$,
\begin{equation}
    H(S_j^{\text{leftover}}) \le \sum_\mathcal{C} \sum_{i = 0}^{\infty} \sum_{c \in  \operatorname{Level}(\mathcal{C},i) \cap \operatorname{Small}(\mathcal{C})} \phi(\operatorname{size}(c)).
\end{equation}
Hence, it follows that
\begin{equation}
    \sum_{j=1}^m H(S_j^{\text{leftover}}) \le m \cdot \sum_\mathcal{C} \sum_{i = 0}^{\infty} \sum_{c \in  \operatorname{Level}(\mathcal{C},i) \cap \operatorname{Small}(\mathcal{C})} \phi(\operatorname{size}(c)).\label{eq:leftover-sum}
\end{equation}

We now focus on bounding the key quantity in the right-hand side of \cref{eq:leftover-sum}:
\begin{lemma}\label{lemma:leftover-bound}
    For $\alpha = \eta^3/2$ and $\tau = \eta^9/n^6$, it holds that
    \begin{equation*}
        \sum_\mathcal{C} \sum_{i = 0}^{\infty} \sum_{c \in  \operatorname{Level}(\mathcal{C},i) \cap \operatorname{Small}(\mathcal{C})} \phi(\operatorname{size}(c)) \le \eta \cdot 1240m \cdot 2^{3m}.
    \end{equation*}
\end{lemma}
\begin{proof}
Let us examine how the number of states and the amount of mass changes throughout the tree. First, we may bound the mass of leaves in $\operatorname{Level}(\mathcal{C},i)$:

\begin{claim}\label{claim:tree-mass}
    $\sum_{c \in \operatorname{Level}(\mathcal{C},i) \cap \operatorname{Small}(\mathcal{C})}  \operatorname{size}(c) \le \operatorname{size}(\mathcal{C}) \cdot (2(m+1) \eta)^i$.
\end{claim}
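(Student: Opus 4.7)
The plan is to prove this by mass accounting based on a notion of "level-$i$ anchor" nodes, inducting on $i$. I would define a level-$i$ anchor to be any node at level exactly $i$ whose parent has level $i-1$, with the convention that the root $\mathcal{C}$ is the unique level-$0$ anchor. Because dashed edges are exactly those that increase the level, every node at level $i$ is reached from a unique level-$i$ anchor ancestor via a path of solid edges only. In particular, every level-$i$ leaf in $\operatorname{Small}(\mathcal{C})$ lies in the solid-subtree rooted at exactly one anchor.

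The first key observation is that solid edges preserve mass: at any solid-only frontier of the subtree rooted at a level-$i$ anchor $r$, the sizes sum to exactly $\operatorname{size}(r)$. Hence the total size of level-$i$ leaves descending from $r$ through solid edges is at most $\operatorname{size}(r)$. Summing over anchors, it suffices to show
\[
\sum_{r \text{ a level-}i\text{ anchor}} \operatorname{size}(r) \;\le\; \operatorname{size}(\mathcal{C}) \cdot (2(m+1)\eps)^i,
\]
which also looks like the intended form of the claim's right-hand side (the statement as printed appears to have a typo, $\operatorname{size}(c)$ where it should be $\operatorname{size}(\mathcal{C})$).

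I would then induct on $i$, with the base case $i=0$ being immediate. For the inductive step, fix a level-$(i-1)$ anchor $r$ and consider its solid subtree; its \emph{exit} nodes (nodes with no solid children — either true leaves or nodes with outgoing dashed edges) again have total mass equal to $\operatorname{size}(r)$. At each exit node $v$ with dashed children, the dashed children are the level-$i$ anchors born at $v$; by the definition of a matching operation there are at most $m+1$ of them, each of size at most $2\eps \cdot \operatorname{size}(v)$, so their total size is at most $2(m+1)\eps \cdot \operatorname{size}(v)$. Summing across exit nodes of $r$'s solid subtree bounds the total anchor mass born under $r$ by $2(m+1)\eps \cdot \operatorname{size}(r)$, and then summing across all level-$(i-1)$ anchors gives the inductive step.

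The main obstacle is really just bookkeeping: being careful to separate "level" (a dashed-edge count) from "generation" (any edge) so that the solid-edge mass-preservation step and the dashed-edge geometric-decay step can be cleanly composed. Once that is set up, the claim is essentially a two-line computation, using the coarser per-child bound $2\eps \cdot \operatorname{size}(\text{parent})$ summed over the up-to-$(m+1)$ dashed children to recover precisely the stated factor $(2(m+1)\eps)^i$.
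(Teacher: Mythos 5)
Your proposal is correct and takes the same approach as the paper; the paper's proof is a single sentence appealing to the two facts that solid (splitting) edges preserve mass and each dashed (matching) edge reduces a node to at most $m+1$ children of at most $2\eps$-fraction of the parent's mass, and your anchor/exit-node machinery is a careful formalization of exactly that induction. You are also right that the right-hand side has a typo and should read $\operatorname{size}(\mathcal{C})$ rather than $\operatorname{size}(c)$.
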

\begin{proof}
    Let us define the \textit{top} nodes of $\operatorname{Level}(\mathcal{C},i)$ as all of the nodes in this level with no solid edge parent. Let us also define the \textit{bottom} nodes of $\operatorname{Level}(\mathcal{C},i)$ as all of the nodes in this level with no solid edge children. Observe how for any node with solid edge children, the sum of the two children's sizes is exactly the same as the original node's size; this is because splitting operations keep the total size unchanged. From this logic, we observe the top nodes of a level have the same total size as the bottom nodes of a level. Since $\operatorname{Level}(\mathcal{C},i) \cap \operatorname{Small}(\mathcal{C})$ is a subset of the nodes in the bottom set of level $i$, it is sufficient to provide the desired upper bound for the total size of the nodes in the top set of level $i$. For $i=0$, our desired bound holds by definition since the top set of level 0 is exactly $\mathcal{C}$. For $i>0$, we observe how the top set of level $i$ are all children of the bottom set of level $i-1$, via a matching operation. Since a matching operation reduces some coupling state to at most $m+1$ coupling states with at most $2 \eta$ fraction mass of the parent, then the total size of the top set of level $i$ is at most $(2(m+1)\eta)$ times the total size of the top set of level $i-1$. This inductively implies our desired bound. 
\end{proof}

 We can also bound the number of leaves in each level:

 \begin{claim}\label{claim:tree-states}
     $|\operatorname{Level}(\mathcal{C},i) \cap \operatorname{Small}(\mathcal{C})| \le 2^{m} \cdot ((m+1) \cdot 2^{m})^i$.
 \end{claim}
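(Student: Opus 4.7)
The plan is to decompose the tree described in \cref{fig:tree} into a collection of ``near-descendant subtrees'' (subtrees consisting only of solid edges) connected by dashed edges, and bound the leaves by combining a per-subtree leaf bound with a bound on the number of such subtrees at each level.

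First, I would observe that every node at level $i$ belongs to a unique maximal solid-only subtree, whose root is either $\mathcal{C}$ itself (when $i = 0$) or the endpoint of a dashed edge whose other endpoint is a leaf of some level-$(i-1)$ subtree. Call these the \emph{level-$i$ subtrees}, and let $R_i$ denote the number of such subtrees. By \cref{claim:important-bound}(iv), every node has near-height at most $m$, so each such subtree is a binary tree of depth at most $m$ (solid edges always split into 2 children), hence has at most $2^m$ leaves.

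Next, I would set up a recurrence for $R_i$. Each leaf of a level-$(i-1)$ subtree either is a leaf of the entire tree (contributing to $\operatorname{Small}$) or has an outgoing dashed edge, which produces at most $m+1$ children; each such child becomes the root of a level-$i$ subtree. Thus $R_i \le (m+1) \cdot 2^m \cdot R_{i-1}$. Since $R_0 = 1$, a straightforward induction gives $R_i \le ((m+1) \cdot 2^m)^i$.

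Finally, every element of $\operatorname{Level}(\mathcal{C}, i) \cap \operatorname{Small}(\mathcal{C})$ is a leaf of the whole tree at level $i$, hence must be a leaf of some level-$i$ subtree. Multiplying the per-subtree leaf bound $2^m$ by the number of subtrees gives
\begin{equation*}
|\operatorname{Level}(\mathcal{C},i) \cap \operatorname{Small}(\mathcal{C})| \le 2^m \cdot R_i \le 2^m \cdot ((m+1) \cdot 2^m)^i,
\end{equation*}
as desired. The one spot requiring care is verifying that the subtree decomposition is well-defined and that leaves of the whole tree at level $i$ cannot escape this counting, which follows because a leaf of the whole tree has no outgoing edges of either type and therefore sits as a leaf of exactly one solid-only subtree at its own level. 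This is really the only non-routine step; everything else is bookkeeping on the binary structure of solid edges and the fan-out of dashed edges.
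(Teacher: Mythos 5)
Your proposal is correct and is essentially the same argument as the paper's: the paper runs the induction directly on the number of nodes with no near-descendants at each level (which is exactly the set of leaves of your solid-only subtrees), using the near-height bound of \cref{claim:important-bound}(iv) for the $2^m$ factor and the fan-out of at most $m+1$ from matching operations for the recurrence. Your decomposition into level-$i$ subtrees with the recurrence $R_i \le (m+1)\cdot 2^m \cdot R_{i-1}$ is just a repackaging of the same bookkeeping.
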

 \begin{proof}
     This follows from simple induction. For $i=0$, there are at most $2^m$ nodes in this level with no near-descendants, because the near-height of the root is bounded by $m$ via \cref{claim:important-bound}. 

     For $i>0$, consider all nodes with no near-descendants in level $i-1$ (we know there are at most $2^m \cdot ((m+1) \cdot 2^{m})^{i-1}$). Each of these will split into at most $m+1$ nodes through a matching operation. Then, the resulting nodes similarly have near-height bounded by $m$, so the number of nodes in level $i$ with no near-descendants is at most $2^{m} \cdot ((m+1) \cdot 2^{m})^i$. Since any leaf will have no near-descendants, this implies our claim.
 \end{proof}

\cref{claim:tree-states} and concavity of entropy will help us start our desired bound:

\begin{align*}
    &\sum_\mathcal{C} \sum_{i = 0}^{\infty} \sum_{c \in  \operatorname{Level}(\mathcal{C},i) \cap \operatorname{Small}(\mathcal{C})} \phi(\operatorname{size}(c)) \le \sum_\mathcal{C} \sum_{i = 0}^{\infty} \left( \left( \sum_{c \in  \operatorname{Level}(\mathcal{C},i) \cap \operatorname{Small}(\mathcal{C})} \operatorname{size}(c) \right) \cdot \log \left( \frac{|\operatorname{Level}(\mathcal{C},i) \cap \operatorname{Small}(\mathcal{C})|}{\sum_{c \in  \operatorname{Level}(\mathcal{C},i) \cap \operatorname{Small}(\mathcal{C})} \operatorname{size}(c) } \right) \right) \\
    &\le \sum_\mathcal{C} \sum_{i = 0}^{\infty} \left( \left( \sum_{c \in  \operatorname{Level}(\mathcal{C},i) \cap \operatorname{Small}(\mathcal{C})} \operatorname{size}(c) \right) \cdot \log \left( \frac{2^{m} \cdot ((m+1) \cdot 2^{m})^i}{\sum_{c \in  \operatorname{Level}(\mathcal{C},i) \cap \operatorname{Small}(\mathcal{C})} \operatorname{size}(c) } \right) \right) \intertext{Let $f_i(x) \triangleq x \log \left( \frac{2^m \cdot ((m+1) \cdot 2^m)^i}{x}\right)$, and let $g_i(x) \triangleq x \log \left( \frac{2^{2mi + m + 2}}{x}\right)$. Observe how for $x \in [0,1]$, $g_i$ is non-decreasing in $x$ and $f_i(x) \le g_i(x)$. We may use this to then bound:}
    &= \sum_\mathcal{C} \sum_{i = 0}^{\infty} f_i\left( \sum_{c \in  \operatorname{Level}(\mathcal{C},i) \cap \operatorname{Small}(\mathcal{C})} \operatorname{size}(c) \right) \\
    &\le \sum_\mathcal{C} \sum_{i = 0}^{\infty} g_i\left( \sum_{c \in  \operatorname{Level}(\mathcal{C},i) \cap \operatorname{Small}(\mathcal{C})} \operatorname{size}(c) \right) \intertext{We observe how the quantity $\sum_{c \in  \operatorname{Level}(\mathcal{C},i) \cap \operatorname{Small}(\mathcal{C})} \operatorname{size}(c)$ can be upper bounded by \cref{claim:tree-mass}, or by bounding the number of states with \cref{claim:tree-states} and using how each state has size at most $\tau / \alpha$. The former bound will be better for larger levels, and latter bound will be better for smaller levels. We invoke the latter for levels up to $j$ (an integer we will choose later), and the former for levels after $j$:}
    & \le \sum_\mathcal{C} \sum_{i = 0}^{j} g_i\left( (\tau / \alpha) \cdot |\operatorname{Level}(\mathcal{C},i) \cap \operatorname{Small}(\mathcal{C})| \right) + \sum_\mathcal{C} \sum_{i = j+1}^{\infty} g_i\left(  \operatorname{size}(\mathcal{C}) \cdot (2(m+1) \eta)^i \right)\\
    & \le \sum_\mathcal{C} \sum_{i = 0}^{j} g_i\left( (\tau / \alpha) \cdot 2^{m} \cdot ((m+1) \cdot 2^{m})^i \right) + \sum_\mathcal{C} \sum_{i = j+1}^{\infty} g_i\left( (2(m+1) \eta)^i   \right) \intertext{Since we are using monotonicity of $g_i$ for $x\in [0,1]$, the following holds as long as $(\tau / \alpha) \cdot 2^{2mi+m} \le 1$ (this will hold later when we have chosen parameters):}
    & \le \sum_\mathcal{C} \sum_{i = 0}^{j} g_i\left( (\tau / \alpha) \cdot 2^{2mi+m}\right) + \sum_\mathcal{C} \sum_{i = j+1}^{\infty} g_i\left( (4m \eta)^i   \right) \\
    & = \sum_\mathcal{C} \sum_{i = 0}^{j}(\tau / \alpha) \cdot 2^{2mi+m} \cdot \log \left( \frac{2^{2mi+m+2}}{(\tau / \alpha) \cdot 2^{2mi+m}} \right)  + \sum_\mathcal{C} \sum_{i = j+1}^{\infty} (4m\eta)^i \cdot \log \left( \frac{2^{2mi+m+2}}{(4m\eta)^i} \right) \\
    & \le \sum_\mathcal{C} \sum_{i = 0}^{j}(\tau / \alpha) \cdot 2^{2mi+m} \cdot (2 + \log(\alpha/\tau))  + \sum_\mathcal{C} \sum_{i = j+1}^{\infty} (4m\eta)^i \cdot i \cdot (3m + \log \left( 1 / \eta \right) ) \intertext{Since both series have consecutive ratios of at least $2$, using $4m \eta \le 1/4$, it holds:}
    & \le 2 \cdot \sum_\mathcal{C} \left(  (\tau / \alpha) \cdot 2^{2mj+m} \cdot (2 + \log(\alpha/\tau))  + (4m\eta)^{j+1} \cdot (j+1) \cdot (3m + \log \left( 1 / \eta \right) ) \right) \intertext{Since we will later choose a $j\ge 1$:}
    & \le 2 \cdot \sum_\mathcal{C} \left(  (\tau / \alpha) \cdot 2^{3mj} \cdot (2 + \log(\alpha/\tau))  + (4m\eta)^{j+1} \cdot (j+1) \cdot (3m + \log \left( 1 / \eta \right) ) \right) \intertext{What remains is to choose $\tau,j$ appropriately, and use a bound on the number of coupling states $\mathcal{C}$ in $\sol'$ at the start of the dynamic program. Lemma 5.1 of \cite{compton2023minimum} shows via a two-paragraph proof that for $m$ distributions over at most $n$ states, there always exists a minimum-entropy coupling with support size at most $nm - (m-1)$. If we choose $\sol$ to such a coupling, then \cref{claim:preprocessing-rounds} implies how after the preprocessing phase there will be at most $nm - (m-1) + nm \lceil \log(1/\tau) \rceil$ coupling states in $\sol'$: }
    & \le 2nm \cdot \lceil 1 + \log(1/\tau) \rceil \cdot  \left(  (\tau / \alpha) \cdot 2^{3mj} \cdot (2 + \log(\alpha/\tau))  + (4m\eta)^{j+1} \cdot (j+1) \cdot (3m + \log \left( 1 / \eta \right) ) \right) \\
    & \le 2nm \cdot \lceil 1 + \log(1/\tau) \rceil \cdot  \left(  (\tau / \alpha) \cdot 2^{3mj} \cdot (2 + \log(\alpha/\tau))  + (4m\eta)^{j} \cdot 4m\eta \cdot 4j\log \left( 1 / \eta \right) \right) \intertext{ Using  $\eta \le \frac{1}{2^{3m} \cdot 4m}$ and $x \log(1/x) \le 1$ for $x>0$:}
    & \le 2nm \cdot \lceil 1 + \log(1/\tau) \rceil \cdot  \left(  (\tau / \alpha) \cdot 2^{3mj} \cdot (2 + \log(\alpha/\tau))  + 4m \cdot 2^{-3mj} \cdot 4j \right) \intertext{We choose $j$ to be the smallest integer where $2^{-3mj} < (\eta/n)^3$. This choice of $j$ implies $1 \le j \le 3 \log(n/\eta)$:}
    & \le 2nm \cdot \lceil 1 + \log(1/\tau) \rceil \cdot  \left(  (\tau / \alpha) \cdot 2^{3m} \cdot (n/\eta)^3 \cdot (2 + \log(\alpha/\tau))  + 48m \cdot  \log(n/\eta) \cdot (\eta/n)^3 \right) \intertext{We recall $\alpha = \eta^3 / 2$, and choose $\tau = \eta^9 / n^6$:}
    & = 2nm \cdot \lceil 1 + \log(n^6 / \eta^9) \rceil \cdot  \left(  (2 \eta^6 / n^6) \cdot 2^{3m} \cdot (n/\eta)^3 \cdot (2 + \log(n^6 / (2\eta^6)))  + 48m \cdot  \log(n/\eta) \cdot (\eta/n)^3 \right)\\
    &\le  \eta \cdot  \left(  280m  \cdot 2^{3m} + 960 m^2\right) \le \eta \cdot 1240 m \cdot 2^{3m}
\end{align*}
Note how our earlier condition of $(\tau / \alpha) \cdot 2^{2mi + m} \le 1$ (for $i \le j$) is satisfied because the left-hand side is maximized when $i=j$, and then the quantity is upper bounded by $(\tau / \alpha) \cdot 2^{3mj} \le (n/\eta)^3 \cdot 2^{3m} \cdot (\tau / \alpha) = 2^{3m+1} \cdot (\eta/n)^3 < 1$. 
\end{proof}
Together, \cref{eq:leftover-sum,lemma:leftover-bound} bound the relevant quantity for the leftover case:
\begin{corollary}
    $\sum_{j=1}^m H(S_j^{\text{leftover}}) \le \eta \cdot 1240 m^2 \cdot 2^{3m}$.
\end{corollary}
\subsection{Concluding the analysis} 
Recall that our algorithm will output a coupling with entropy at most that of the dynamic programming value. Combining all these approximation error guarantees, we conclude how there exists a path in the dynamic program with value at most
\begin{align*}
    &H(\sol) + (H(\sol') - H(\sol)) + \sum_{i=1}^m H(S_i^{\text{leftover}}) \\
    &\le H(\sol) + (2\eta m) + (6 \eta \log(1/\eta)) + (\eta m 2^{m+4}) + (\eta \cdot 1240 m^2 \cdot 2^{3m}) \le H(\sol) + 6 \eta \log(1/\eta) + \eta \cdot 1242 m^2 \cdot 2^{3m}.
\end{align*}
Thus, we may conclude the total approximation error of \cref{alg:fast} is at most
\begin{equation}
    6 \eta \log(1/\eta) + \eta \cdot 1242 m^2 \cdot 2^{3m}. \label{eq:final-approx}
\end{equation}
If we invoke \cref{alg:fast} with $\eta$ defined in terms of $\eps$, then we obtain our main result:
\maintheorem*
\begin{proof}
    We will use \cref{alg:fast} with $\eta$ chosen to make \cref{eq:final-approx} sufficiently small. Suppose $\eta = \eps / (C \log(1/\eps))$ for some $C > 1$, then by \cref{eq:final-approx} the approximation error is bounded by:
    \begin{align*}
        6 \eta \log(1/\eta) + \eta \cdot 1242 m^2 \cdot 2^{3m} &\le \eps \cdot \left(\frac{12}{C} + \frac{6 \log(C)}{C} + \frac{1242m^2 \cdot 2^{3m}}{C}\right) \le \eps
    \end{align*}
    for $C = 3726 m^2 \cdot 2^{3m}$. Hence, if $\eta \le  \frac{\eps}{3726m^2 \cdot 2^{3m} \cdot \log(1/\eps)}$, then we achieve our desired approximation error and the value of $\eta$ satisfies the required condition that $\eta \le \frac{1}{2^{3m} \cdot 4m}$. Since we additionally require that $\eta$ is a power of $2$, we simply round down to the nearest power of $2$, and conclude this is a valid $\eta$ where $\eta \ge \frac{\eps}{7452m^2 \cdot 2^{3m} \cdot \log(1/\eps)}$. As we discussed earlier in \cref{sec:algo}, the runtime for \cref{alg:fast} is $(1/\tau)^{O(m \log(\nicefrac{1}{(\eta \alpha)}) / (\eta)^2)}$. Moreover, since $\alpha = \eta^3/2$ and $\tau = \eta^9/n^6$, we conclude the running time bound with our choice of $\eta$:
    \begin{align*}
        & (1/\tau)^{O(m \log(\nicefrac{1}{(\eta \alpha)}) / \eta^2)} \le n^{O(m \cdot \log(\nicefrac{1}{\eta}) \cdot  \log(\nicefrac{1}{(\eta \alpha)}) / (\eta^2)} \\
        & \le n^{O(m \cdot (\log(1/\eps) + m)^2 \cdot (\log^2(1/\eps) \cdot m^4 \cdot 2^{6m} / \eps^2))} \le n^{O(m^7 \cdot 2^{6m} \cdot \log^4(1/\eps)/\eps^2 )}. \quad \qedhere
    \end{align*}
\end{proof}
\section{Discussion}\label{sec:discuss}

In \cref{theorem:main-theorem}, we showed how there exists an $\eps$-approximate coupling algorithm with running time $n^{O(m^7 \cdot 2^{6m} \cdot \log^4(1/\eps)/\eps^2 )}$: proving there exists a PTAS for constant $m=O(1)$. We suspect that techniques similar to those in this work could likely yield an analogous result for a similar problem studied in \cite{cicalese2017bounds}, although we do not execute this plan ourselves.

The most pressing open problem is whether there exists such a PTAS for general $m$, or whether it is APX-hard. Designing a PTAS may require an entirely new algorithmic approach, as our dynamic programming style approach inherently has an exponential dependence in $m$. We do not conjecture whether there exists a PTAS or it is APX-hard.

More practically, it is also worth noting how the PTAS given by \cref{alg:fast} seems too slow for most applications. We expect there are some settings where even the exact, exponential-time algorithms in \cite{compton2023minimum} will perform faster. In this sense, a natural open problem is whether a faster (enough to be practical) PTAS exists for constant $m$. Moreover, tighter analyses of the greedy coupling algorithm \cite{kocaoglu2017entropic} still seem quite interesting, as it is perhaps the algorithm with best-known approximation guarantees among those with relatively practical running times.

\section*{Acknowledgments}
Thank you to Kristjan Greenewald, Dmitriy Katz, Murat Kocaoglu, and Benjamin Qi, for our time working together on \cite{compton2023minimum}. Thank you to the reviewers for their detailed feedback. This work was supported by the National Defense Science \& Engineering Graduate (NDSEG) Fellowship Program, Tselil Schramm's NSF CAREER Grant no. 2143246, and Gregory Valiant's Simons Foundation Investigator Award and NSF award AF-2341890.

\bibliography{ref}

@inproceedings{compton2023minimum,
  title={Minimum-entropy coupling approximation guarantees beyond the majorization barrier},
  author={Compton, Spencer and Katz, Dmitriy and Qi, Benjamin and Greenewald, Kristjan and Kocaoglu, Murat},
  booktitle={International Conference on Artificial Intelligence and Statistics},
  pages={10445--10469},
  year={2023},
  organization={PMLR}
}

@article{hochbaum1987using,
  title={Using dual approximation algorithms for scheduling problems theoretical and practical results},
  author={Hochbaum, Dorit S and Shmoys, David B},
  journal={Journal of the ACM (JACM)},
  volume={34},
  number={1},
  pages={144--162},
  year={1987},
  publisher={ACM New York, NY, USA}
}

@article{jansen2020closing,
  title={Closing the gap for makespan scheduling via sparsification techniques},
  author={Jansen, Klaus and Klein, Kim-Manuel and Verschae, Jos{\'e}},
  journal={Mathematics of Operations Research},
  volume={45},
  number={4},
  pages={1371--1392},
  year={2020},
  publisher={INFORMS}
}

@article{kovavcevic2012hardness,
  title={On the entropy of couplings},
  author={Kova{\v{c}}evi{\'c}, Mladen and Stanojevi{\'c}, Ivan and {\v{S}}enk, Vojin},
  journal={Information and Computation},
  volume={242},
  pages={369--382},
  year={2015},
  publisher={Elsevier}
}

@inproceedings{sokota2024computing,
  title={Computing Low-Entropy Couplings for Large-Support Distributions},
  author={Sokota, Samuel and Sam, Dylan and de Witt, Christian Schroeder and Compton, Spencer and Foerster, Jakob and Kolter, J Zico},
  booktitle={Uncertainty in Artificial Intelligence},
  pages={3279--3298},
  year={2024},
  organization={PMLR}
}

@inproceedings{sokota2022communicating,
  title={Communicating via markov decision processes},
  author={Sokota, Samuel and De Witt, Christian A Schroeder and Igl, Maximilian and Zintgraf, Luisa M and Torr, Philip and Strohmeier, Martin and Kolter, Zico and Whiteson, Shimon and Foerster, Jakob},
  booktitle={International Conference on Machine Learning},
  pages={20314--20328},
  year={2022},
  organization={PMLR}
}

@article{de2022perfectly,
  title={Perfectly secure steganography using minimum entropy coupling},
  author={de Witt, Christian Schroeder and Sokota, Samuel and Kolter, J Zico and Foerster, Jakob and Strohmeier, Martin},
  journal={arXiv preprint arXiv:2210.14889},
  year={2022}
}

@article{cicalese2019minimum,
  title={Minimum-entropy couplings and their applications},
  author={Cicalese, Ferdinando and Gargano, Luisa and Vaccaro, Ugo},
  journal={IEEE Transactions on Information Theory},
  volume={65},
  number={6},
  pages={3436--3451},
  year={2019},
  publisher={IEEE}
}

@inproceedings{kocaoglu2017entropic,
  title={Entropic causal inference},
  author={Kocaoglu, Murat and Dimakis, Alexandros and Vishwanath, Sriram and Hassibi, Babak},
  booktitle={Proceedings of the AAAI Conference on Artificial Intelligence},
  volume={31},
  number={1},
  year={2017}
}

@inproceedings{kocaoglu2017isit,
  title={Entropic causality and greedy minimum entropy coupling},
  author={Kocaoglu, Murat and Dimakis, Alexandros G and Vishwanath, Sriram and Hassibi, Babak},
  booktitle={2017 IEEE International Symposium on Information Theory (ISIT)},
  pages={1465--1469},
  year={2017},
  organization={IEEE}
}

@inproceedings{compton2022entropic,
  title={Entropic causal inference: Graph identifiability},
  author={Compton, Spencer and Greenewald, Kristjan and Katz, Dmitriy A and Kocaoglu, Murat},
  booktitle={International Conference on Machine Learning},
  pages={4311--4343},
  year={2022},
  organization={PMLR}
}

@article{compton2020entropic,
  title={Entropic causal inference: Identifiability and finite sample results},
  author={Compton, Spencer and Kocaoglu, Murat and Greenewald, Kristjan and Katz, Dmitriy},
  journal={Advances in Neural Information Processing Systems},
  volume={33},
  pages={14772--14782},
  year={2020}
}

@inproceedings{javidian2021quantum,
  title={Quantum entropic causal inference},
  author={Javidian, Mohammad Ali and Aggarwal, Vaneet and Bao, Fanglin and Jacob, Zubin},
  booktitle={Quantum Information and Measurement},
  pages={F2C--3},
  year={2021},
  organization={Optica Publishing Group}
}

@article{javidian2022quantum,
  title={Quantum causal inference in the presence of hidden common causes: An entropic approach},
  author={Javidian, Mohammad Ali and Aggarwal, Vaneet and Jacob, Zubin},
  journal={Physical Review A},
  volume={106},
  number={6},
  pages={062425},
  year={2022},
  publisher={APS}
}

@article{ebrahimi2024minimum,
  title={Minimum entropy coupling with bottleneck},
  author={Ebrahimi, Reza and Chen, Jun and Khisti, Ashish},
  journal={Advances in Neural Information Processing Systems},
  volume={37},
  pages={59655--59688},
  year={2024}
}

@inproceedings{bounoua2025learning,
  title={Learning to Match Unpaired Data with Minimum Entropy Coupling},
  author={Bounoua, Mustapha and Franzese, Giulio and Michiardi, Pietro},
  booktitle={Forty-second International Conference on Machine Learning},
year={2025}
}

@inproceedings{liang2023multimodal,
  title={Multimodal Learning Without Labeled Multimodal Data: Guarantees and Applications},
  author={Liang, Paul Pu and Ling, Chun Kai and Cheng, Yun and Obolenskiy, Alexander and Liu, Yudong and Pandey, Rohan and Wilf, Alex and Morency, Louis-Philippe and Salakhutdinov, Russ},
  booktitle={The Twelfth International Conference on Learning Representations},
  year={2024}
}

@inproceedings{chowdhury2025fundamental,
  title={Fundamental Limits of Perfect Concept Erasure},
  author={Chowdhury, Somnath Basu Roy and Dubey, Kumar Avinava and Beirami, Ahmad and Kidambi, Rahul and Monath, Nicholas and Ahmed, Amr and Chaturvedi, Snigdha},
  booktitle={International Conference on Artificial Intelligence and Statistics},
  pages={901--909},
  year={2025},
  organization={PMLR}
}

@inproceedings{zamani2024improving,
  title={Improving Achievability of Cache-Aided Private Variable-Length Coding with Zero Leakage},
  author={Zamani, Amirreza and Skoglund, Mikael},
  booktitle={2024 22nd International Symposium on Modeling and Optimization in Mobile, Ad Hoc, and Wireless Networks (WiOpt)},
  pages={218--224},
  year={2024},
  organization={IEEE}
}

@article{zamani2025variable,
  title={Variable-Length Coding with Zero and Non-Zero Privacy Leakage},
  author={Zamani, Amirreza and Skoglund, Mikael},
  journal={Entropy},
  volume={27},
  number={2},
  pages={124},
  year={2025},
  publisher={MDPI}
}

@inproceedings{zamani2025private,
  title={Private Variable-Length Coding with Sequential Encoder},
  author={Zamani, Amirreza and Oechtering, Tobias J and G{\"u}nd{\"u}z, Deniz and Skoglund, Mikael},
  booktitle={2025 IEEE Wireless Communications and Networking Conference (WCNC)},
  pages={1--6},
  year={2025},
  organization={IEEE}
}

@article{cardinal2008tight,
  title={Tight results on minimum entropy set cover},
  author={Cardinal, Jean and Fiorini, Samuel and Joret, Gwena{\"e}l},
  journal={Algorithmica},
  volume={51},
  number={1},
  pages={49--60},
  year={2008},
  publisher={Springer}
}

@article{vidyasagar2012metric,
  title={A metric between probability distributions on finite sets of different cardinalities and applications to order reduction},
  author={Vidyasagar, Mathukumalli},
  journal={IEEE Transactions on Automatic Control},
  volume={57},
  number={10},
  pages={2464--2477},
  year={2012},
  publisher={IEEE}
}

@inproceedings{cicalese2016approximating,
  title={Approximating probability distributions with short vectors, via information theoretic distance measures},
  author={Cicalese, Ferdinando and Gargano, Luisa and Vaccaro, Ugo},
  booktitle={2016 IEEE International Symposium on Information Theory (ISIT)},
  pages={1138--1142},
  year={2016},
  organization={IEEE}
}

@article{li2021efficient,
  title={Efficient approximate minimum entropy coupling of multiple probability distributions},
  author={Li, Cheuk Ting},
  journal={IEEE Transactions on Information Theory},
  volume={67},
  number={8},
  pages={5259--5268},
  year={2021},
  publisher={IEEE}
}

@inproceedings{rossi2019greedy,
  title={Greedy additive approximation algorithms for minimum-entropy coupling problem},
  author={Rossi, Massimiliano},
  booktitle={2019 IEEE International Symposium on Information Theory (ISIT)},
  pages={1127--1131},
  year={2019},
  organization={IEEE}
}

@inproceedings{compton2022tighter,
  title={A tighter approximation guarantee for greedy minimum entropy coupling},
  author={Compton, Spencer},
  booktitle={2022 IEEE International Symposium on Information Theory (ISIT)},
  pages={168--173},
  year={2022},
  organization={IEEE}
}

@inproceedings{shkel2023information,
  title={Information spectrum converse for minimum entropy couplings and functional representations},
  author={Shkel, Yanina Y and Yadav, Anuj Kumar},
  booktitle={2023 IEEE International Symposium on Information Theory (ISIT)},
  pages={66--71},
  year={2023},
  organization={IEEE}
}

@article{ma2025efficient,
  title={Efficient approximate Minimum-R{\'e}nyi entropy couplings},
  author={Ma, Ya-Jing and Wang, Feng and Wu, Xian-Yuan},
  journal={Discrete and Continuous Dynamical Systems-S},
  pages={0--0},
  year={2025},
  publisher={Discrete and Continuous Dynamical Systems-S}
}

@inproceedings{yadav2025information,
  title={Approximation Guarantees for Minimum Rényi Entropy Functional Representations},
  author={Yadav, Anuj Kumar and Shkel, Yanina Y},
  booktitle={2025 IEEE International Symposium on Information Theory (ISIT)},
  year={2025},
  organization={IEEE}
}

@article{leung1989bin,
  title={Bin packing with restricted piece sizes},
  author={Leung, Joseph YT},
  journal={Information Processing Letters},
  volume={31},
  number={3},
  pages={145--149},
  year={1989},
  publisher={Elsevier}
}

@article{hochba1997approximation,
  title={Approximation algorithms for NP-hard problems},
  author={Hochbaum, Dorit S},
  journal={ACM Sigact News},
  volume={28},
  number={2},
  pages={40--52},
  year={1997},
  publisher={ACM New York, NY, USA}
}

@inproceedings{alon1997approximation,
  title={Approximation schemes for scheduling},
  author={Alon, Noga and Azar, Yossi and Woeginger, Gerhard J and Yadid, Tal},
  booktitle={SODA},
  pages={493--500},
  year={1997}
}

@article{alon1998approximation,
  title={Approximation schemes for scheduling on parallel machines},
  author={Alon, Noga and Azar, Yossi and Woeginger, Gerhard J and Yadid, Tal},
  journal={Journal of Scheduling},
  volume={1},
  number={1},
  pages={55--66},
  year={1998},
  publisher={Wiley Online Library}
}

@article{jansen2010eptas,
  title={An EPTAS for scheduling jobs on uniform processors: using an MILP relaxation with a constant number of integral variables},
  author={Jansen, Klaus},
  journal={SIAM Journal on Discrete Mathematics},
  volume={24},
  number={2},
  pages={457--485},
  year={2010},
  publisher={SIAM}
}

@inproceedings{chen2014optimality,
  title={On the optimality of approximation schemes for the classical scheduling problem},
  author={Chen, Lin and Jansen, Klaus and Zhang, Guochuan},
  booktitle={Proceedings of the twenty-fifth annual ACM-SIAM symposium on Discrete algorithms},
  pages={657--668},
  year={2014},
  organization={SIAM}
}

@article{cicalese2017bounds,
  title={Bounds on the entropy of a function of a random variable and their applications},
  author={Cicalese, Ferdinando and Gargano, Luisa and Vaccaro, Ugo},
  journal={IEEE Transactions on Information Theory},
  volume={64},
  number={4},
  pages={2220--2230},
  year={2017},
  publisher={IEEE}
}

@article{allender2009complexity,
  title={On the complexity of numerical analysis},
  author={Allender, Eric and B{\"u}rgisser, Peter and Kjeldgaard-Pedersen, Johan and Miltersen, Peter Bro},
  journal={SIAM Journal on Computing},
  volume={38},
  number={5},
  pages={1987--2006},
  year={2009},
  publisher={SIAM}
}

@article{etessami2010complexity,
  title={On the complexity of Nash equilibria and other fixed points},
  author={Etessami, Kousha and Yannakakis, Mihalis},
  journal={SIAM Journal on Computing},
  volume={39},
  number={6},
  pages={2531--2597},
  year={2010},
  publisher={SIAM}
}

@article{kayal2012sum,
  title={On the sum of square roots of polynomials and related problems},
  author={Kayal, Neeraj and Saha, Chandan},
  journal={ACM Transactions on Computation Theory (TOCT)},
  volume={4},
  number={4},
  pages={1--15},
  year={2012},
  publisher={ACM New York, NY, USA}
}

\appendix

\subsection{Remarks on finite-precision arithmetic}\label{appendix:precise}

As far as we know, it is open whether the simple expression $\phi(x)=x \log(1/x)$ can be exactly computed in polynomial time (e.g. see Remark 3.5 of \cite{kovavcevic2012hardness}). A similar phenomenon, which is better studied, is how we do not know whether it is possible to exactly compute the sum of square roots $\sum_{i} \sqrt{x_i}$ in polynomial time \cite{allender2009complexity,etessami2010complexity,kayal2012sum}. 

We will assume the input size of distribution states are rationals that may be represented within $L$ bits each. For our purposes, any instance where entropy is computed in this algorithm should actually be replaced with a subroutine that approximately computes $\phi(x)$. Observe that any coupling considered by the dynamic program ultimately consists of at most $O(m ( 1/\tau + n))$ coupling states, so if each entropy term is calculated up to additive accuracy $\delta = \frac{\eps}{C(m (1/\tau + n))}$ for a sufficiently large $C$, then the dynamic program value will be accurate (compared to exactly computing the entropy) up to an additive $\eps$ term. For a concrete analysis, this means if you invoke the algorithm in our work with $\eps/2$, and compute entropy up to $\delta = \frac{\eps}{C(m (1/\tau + n))}$ additive precision for large enough $C$, then the algorithm will yield an $\eps$-additive guarantee.

Computing the $\phi(x)$ terms up to additive accuracy $\delta$ is relatively standard: for a rough example of one approach, you can estimate $0$ for any $x \le \delta^2$, and otherwise you can approximate $-x \log(x)$ sufficiently well by approximating $\log(x)$ with a $\poly(1/\delta)$-order Taylor expansion of $\log(1-z)$. We avoid some details, as it seems besides the point of this work to get into specifics what can be done in machine word operations, but we note that this entropy-term approximation does not dominate the earlier running time bound. If each of the approximate entropy evaluations take $\poly(1/\delta, L, m, 1/\eps, n)$ time, this multiplicative increase in runtime is still dominated by the final runtime bound of $n^{O(m^7 \cdot 2^{6m} \cdot \log^4(1/\eps)/\eps^2 )}$ (we suppress the polynomial in $L$ bit-complexity factors in our runtime analysis, as is standard). Throughout the algorithm, we will also store values of $\G$, which take the form  $2^{-i}\eta j$, where $i \in \mathbb{Z}$ and $j \in \{1,\dots, \nicefrac{2}{\eta^2}-1\}$; we represent the values of $\G$ by the integer pair $(i,j)$. Note that the relevant values of $i$ are all $O(\log(1/\tau))$, and all operations involving $\G$ can be completed with the same negligible multiplicative increase in time. For example, comparing two elements of $\G$, $(i_1,j_1)$ and $(i_2,j_2)$, only requires comparing whether $j_1 > j_2 \cdot 2^{i_1 - i_2}$.

\end{document}